\newcolumntype{.}{D{.}{.}{-1}}
\newtheorem{definition}{Definition}
\newtheorem{remark}{Remark}
\newtheorem{example}{Example}
\newtheorem{theorem}{Theorem}
\newtheorem{proposition}{Proposition}
\newtheorem{assumption}{Assumption}
\newcommand{\iv}{I}
\newcommand{\VYA}{\textbf{Y}(\textbf{A})}
\newcommand{\VYB}{\textbf{Y}(\textbf{B})}
\newcommand{\VYZ}{\textbf{Y}(\textbf{Z})}
\newcommand{\VYZo}{\textbf{Y}(\textbf{Z}_0)}
\newcommand{\Z}{\textbf{Z}}
\newcommand{\calZ}{\mathcal{Z}}
\newcommand{\A}{\textbf{A}}
\newcommand{\B}{\textbf{B}}
\newcommand{\Ni}{\mathcal{N}}
\newcommand{\sexp}{\mathbb{E}_\mathcal{R}}
\newcommand{\sexpg}{\mathbb{E}_\mathcal{G}}
\newcommand{\svar}{\mathbb{V}_\mathcal{R}}
\newcommand{\sprob}{\mathbb{P}}
\newcommand{\scov}{\mathbb{C}ov_\mathcal{R}}
\newcommand{\G}{\mathcal{G}}
\newcommand{\R}{\mathcal{R}}
\newcommand{\Y}{\textbf{Y}}
\newcommand{\CRD}{Completely Randomized Design }
\newcommand{\BD}{Bernoulli Design }
\newcommand{\CBD}{Conditional Bernoulli Design }
\begin{document}
\pagestyle{empty}

\title{
Limitations of design-based causal inference and A/B testing under arbitrary and network interference
\protect\thanks{Guillaume W. Basse is a graduate student in the Department of Statistics at Harvard University (\href{mailto:gbasse@fas.harvard.edu}{gbasse@fas.harvard.edu}). Edoardo M.~Airoldi is an Associate Professor of Statistics at Harvard University (\href{mailto:airoldi@fas.harvard.edu}{airoldi@fas.harvard.edu}).
This work was partially supported 
 by the National Science Foundation under grants 
  CAREER IIS-1149662 and IIS-1409177,
 and by the Office of Naval Research under grants 
  YIP N00014-14-1-0485 and N00014-17-1-2131. 
 Guillaume W. Basse is a Google Fellow in Statistics.
}}

\author{
 Guillaume W. Basse, Edoardo M. Airoldi\\
 Department of Statistics\\
 Harvard University, Cambridge, MA 02138, USA}
\date{}

\maketitle
\thispagestyle{empty}

\newpage
\begin{abstract}
Randomized experiments on a network often involve interference between connected units; i.e., a situation in which an individual's treatment can affect the response of another individual. Current approaches to deal with interference, in theory and in practice, often make restrictive assumptions on its structure---for instance, assuming  that interference is local---even when using otherwise  nonparametric inference strategies. This reliance on explicit restrictions on the interference mechanism suggests a shared intuition that inference is impossible without any assumptions on the interference structure. In this paper, we begin by formalizing this intuition in the context of a classical nonparametric approach to inference, referred to as design-based inference of causal effects. Next, we show how, always in the context of design-based inference, even parametric structural assumptions that allow the existence of unbiased estimators, cannot guarantee a decreasing variance even in the large sample limit. This lack of concentration in large samples is often observed empirically, in randomized experiments in which interference of some form is expected to be present. This result has direct consequences for the design and analysis of large experiments---for instance, in online social platforms---where the belief is that large sample sizes automatically guarantee small variance. More broadly, our results suggest that although strategies for causal inference in the presence of interference borrow their formalism and main concepts from the traditional causal inference literature, much of the intuition from the no-interference case do not easily transfer to the interference setting. 
\vfill
\noindent {\bf Keywords}: Causal inference; Social network data; Interference
\end{abstract}

\newpage
\singlespacing
% \setcounter{tocdepth}{4}
% 1 Section
% 2 Subsection
% 3 Subsubsection
% 4 Paragraph
% 5 Subparagraph
%\footnotesize
\small
\tableofcontents
\normalsize
\doublespacing

\newpage
\pagestyle{plain}
\setcounter{page}{1}

% ###################################################################################

%%%%%%%%%%%%%%%%%%%
\section{Introduction}
%%%%%%%%%%%%%%%%%%%

In modern randomized experiments, ``interference'' typically means that the response of a given unit
to a certain treatment may depend on the treatment assigned to other units.
In online marketing, \citep{aral2012identifying} showed that adoption of a product by an individual
in a social network tends to encourage the adoption of the same product by their neighbors. Similar
examples can be found in epidemiology, where vaccinating a certain percentage of the population
is expected to lower the health risk for the entire population \citep{hudgens2012toward}, or in 
political science where encouraging an individual to vote is expected to increase the turnout for 
other members of the household \citep{sinclair2012detecting}. Researchers have had some 
success in addressing the question by making assumptions on the interference mechanism 
\citep{hudgens2012toward, sinclair2012detecting, basse2016, ugander2013graph, bowers2016aa, bowers2016ab}. However, because these mechanisms
are often very complex, it is tempting to believe that we can rely on large sample sizes to avoid 
making assumptions. The goal of this paper is to show  how wrong this intuition is, precisely.

%However, because of the recency of some of these issues, it is tempting to believe that we can 
%rely on large sample size... in this paper, we show how this is wrong.

\subsection{Background and set-up}

%ADD DESIGN-BASED, MODEL-BASED + REF TO MODEL-ASSISTED
%
%ADD BRIEF LINK BETWEEN SURVEY SAMPLING AND CAUSAL INFERENCE
%
%(MAYBE SWAP ORDER OF PREVIOUS 2 PARAGRAPHS) 
%
%MENTION COMPLICATIONS SUCH AS INADMISSIBILITY OF HT

% setup
In a randomized experiment (or A/B test), N units are randomly assigned to treatments A or B, 
and an outcome of interest $Y_i(\Z^{obs})$ is measured for each unit $i$, where 
$\Z^{obs} \in \calZ = \{A, B\}^N$ is the observed assignment vector. There are generally two
paradigms available for making inference: the model-based approach, 
and the design-based approach. 

In the model-based approach, %the assignment $\Z^{obs}$ is considered fixed, 
the vector of observed outcomes is modeled $Y_i(\Z^{obs}) \sim F(\vec{\xi})$. The estimands
are generally functions of the parameters $\theta = g(\vec{\xi})$ and the estimators are usually
obtained using maximum likelihood or bayesian inference. Two key points are that the estimators are selected using the model $F(\vec{\xi})$, and the properties of the estimator are typically derived in some asymptotic regime and incorporate randomness due to both the model for the outcomes and the randomization distribution.

In contrast, the design-based approach we consider in this paper takes the opposite perspective and considers the 
\textit{potential outcomes} $\{\Y(\Z)\}_\Z$ as fixed but 
a-priori unknown quantities. The assignment mechanism $\R$, which assigns a probability
$\sprob_\R(\Z)$ to every vector $\Z \in \calZ$ provides the only source of randomness. Estimands
are then generally functions of the potential outcomes $\theta = g(\{\Y(\Z)\})$, and it is 
desirable to find a pair $(\hat{\theta}, \sprob_\R)$ such that the estimator $\hat{\theta}$ has good
properties under the design $\R$ in finite samples. 
One goal of this paper is to show that under arbitrary interference, there exist no pair $(\hat{\theta}, \sprob_\R)$ with good  properties. 
 
 A third approach to inference exists, typically referred to as \textit{model-assisted}, which a attempts to blend the two approaches by using a model for the outcomes to inform the choice of design or correction factors for estimators in a given family  \citep{sarndal2003model,basse2015optimal}. Since the actual inference is then performed from a design-based perspective, this
 approach also falls under the scope of our paper.

% contributions
The idea that 
inference is not feasible without assumptions on the interference mechanism has been suggested in the literature for observational studies \citep[e.g., see][]{shalizi2011homophily}, and also in the context of randomized experiments; for instance \citet{aronow2013estimating}
state that under arbitrary interference, ``it is clear that there would be no meaningful way to use the results  of the experiment'', and then focuses on how to do inference under some form of restricted interference. In the first, and what is to our knowledge also the only, attempt at formalizing the issue, \citet{manski2013identification} shows that in a model-based setting, the distribution of potential outcomes is not identifiable under arbitrary interference. 
 
\subsection{Contributions}

Our paper makes three contributions. First, we extend and clarify the results of 
 \citep{manski2013identification} in the context of design-based inference (Section~\ref{section:all-interferences})
 and prove, among other things, that there exist no consistent estimators under arbitrary interference. Second,
 we extend the work of \citep{aronow2013estimating} by focusing on a popular design and an 
 interference structure commonly assumed in network settings. Assuming an Erdos-Renyi for the network, 
 we show that for a class of unbiased estimators, consistency depends on the parameter $p$ of the model
 for the graph. Finally, using the concept of \textit{effective treatment} introduced by 
 \citep{manski2013identification}, we provide analytical insights into the general problem of interference and the 
 convergence of estimators. 
 
 For researchers,  this paper is meant to provide a clear formalization of some intuitions that have   been suggested. For practitioners, this paper is meant to offer a convincing
 argument for the necessity of making explicit assumptions about the structure of the interference mechanism, to rely be able to rely on  asymptotic standard errors for causal inference.
 
%is infeasible without assumptions on the interference structure has been floating around for a while, but attempts
%at formalization have been very recent. 

%%%%%%%%%%%%%%%
\section{No-interference and arbitrary interference}
\label{section:all-interferences}
%%%%%%%%%%%%%%%

\subsection{Setup}
%%%%%%%%%%%%%%

To avoid pathological cases, we assume that the potential outcomes are bounded:
\begin{assumption}[bounded outcomes]\label{bounded-outcomes}
	There exists $M$ such that for all $N$:
	\begin{equation*}
		0 < Y_i(\Z) < M \quad \forall \,\, i = 1,\ldots N
	\end{equation*}
\end{assumption}
%
%A design $\R$ is fully characterized by the probability $P_\R(\Z)$ it assigns to each assignment $\Z\in\calZ$. 
Although many of our results could generalize to large classes of designs $\R$, we will focus on 
three assignment mechanisms for the sake of clarity. The Completely Randomized Design (CRD) 
assigns a fixed number $N_A$
of randomly selected units to treatment A, and the other $N_B = N-N_A$ units to treatment B. The Bernoulli 
Design (BD) assigns independently each unit to treatments A or B with probability 1/2, while the Conditional 
Bernoulli Design (CBD) operates similarly to the \BD, but excludes the assignments $\Z = \A$ and $\Z = \B$ 
in which all units are assigned to A, or all units assigned to B.

The estimands considered in this paper are of the form $\theta(\A,\B) = g(\VYA, \VYB)$, and estimators 
will be denoted by $\hat{\theta}(\Z)$. 

\begin{example}
	The most popular estimand in causal inference is the average total treatment effect \citep{imbens2015causal}:
	\begin{equation*}
		\theta(\A,\B) = \overline{Y}(\A) - \overline{Y}(\B)
	\end{equation*}
	but our results are stated in greater generality.
\end{example}

All statements about estimators must hold regardless of the value 
of the potential outcomes $\{\VYZ\}_\Z$. In particular we consider the following definition of unbiasedness:
\begin{definition}\label{def:unbiased}
	Let $\theta(\A,\B) = g(\VYA,\VYB)$ be the estimand of interest. An estimator 
	$\hat{\theta}(Z)$ is said to be unbiased if:
	\begin{equation*}
		\sexp[\hat{\theta}(\Z)] = \theta(\A,\B)
	\end{equation*}
	for all values of the potential outcomes $\{\VYZ\}_\Z$ (the expectation being taken with 
	respect to the assignment mechanisms).
\end{definition}
 In particular, an estimator is not unbiased if the equality $E(\hat{\theta}(\Z)) = \theta(\A,\B)$ only 
 holds for specific values of $\{\VYZ\}_Z$. Similar considerations apply to the concept of consistency.

\subsection{Inference under no-interference}
\label{section:sutva}
%%%%%%%%%%%%%%%%%%%%%

The most prominent applications involve situations in which interference is a nuisance
and can be safely be assumed away. The main clause of the popular Stable Unit Treatment 
Value Assumption (SUTVA; \cite{rubin1980comment}) implies that there is no interference 
between units:
\begin{assumption}[No interference]\label{no-interference}
	\begin{equation*}
		Y_i(\Z) = Y_i(Z_i) \quad \forall \,\, i=1, \ldots, N
	\end{equation*}
\end{assumption}
Studying the no-interference case is important because it often implicitly guides 
our intuition in more complex scenarios, especially the large sample behavior or estimators. One aim of 
that paper is to show the dangers of that intuition.

 We illustrate the standard characteristics of inference under this assumption by focusing on the \CRD and
the average total treatment effect:
\begin{equation*}
	\theta(\A,\B) = \frac{1}{N} \sum_i^N (Y_i(\A) - Y_i(\B))
\end{equation*}
as our estimand. The following estimator:
\begin{equation}\label{eq:basic-estimator}
	\hat{\theta}(\Z) = \frac{1}{N_A} \sum_i^N \iv(Z_i=A) Y_i(\Z) - \frac{1}{N_B} \sum_i^N \iv(Z_i=B) Y_i(\Z)
\end{equation}
which under Assumption~\ref{no-interference} simplifies to:
\begin{equation*}
	\hat{\theta}(\Z) = \frac{1}{N_A} \sum_i^N \iv(Z_i=A) Y_i(A_i) -  \frac{1}{N_B} \sum_i^N \iv(Z_i=B) Y_i(B_i)
\end{equation*}
can easily be shown to be unbiased for $\theta(\A,\B)$ under the \CRD. That is:
\begin{equation*}
	\sexp[\hat{\theta}(\Z)] = \theta(\A,\B)
\end{equation*}
and the variance is (see e.g \cite{imbens2015causal} chapter 6):
\begin{flalign*}
	\svar[\hat{\theta}] &= \frac{V_A}{N_A} + \frac{V_B}{N_B} -  \frac{V_\theta}{N}\\
	 &\leq \frac{1}{N_A}\frac{1}{N-1} \sum_i (Y_i(\A) - \overline{Y}(\A))^2 \\
	 &+ \frac{1}{N_B}\frac{1}{N-1} \sum_i (Y_i(\B) - \overline{Y}(\B))^2 \\
	&\leq \frac{1}{N_A(N-1)}NM^2 + \frac{1}{N_B(N-1)}NM^2\\
	&\leq \frac{4M^2}{N-1}
\end{flalign*}
where the terms $V_A$, $V_B$, and $V_\theta$ are defined in appendix. So under Assumption~\ref{bounded-outcomes} and Assumption~\ref{no-interference} there exists an
unbiased estimator, with variance of order $O(1/N)$.

%%%%%%%%%%%%%%%
\subsection{Arbitrary interference}
\label{section:arbitrary-interference}
%%%%%%%%%%%%%%%

Under arbitrary interference, the outcome for unit $i$ depends on the entire assignment vector $\Z$, not
just on its own assignment $Z_i$. We show in this section that not only is the estimator in 
Equation~\eqref{eq:basic-estimator} biased, but that unbiased estimators 
(in the sense of Definition~\ref{def:unbiased}) simply do not exist for a wide class of designs which 
includes both the \CRD and the \CBD
\begin{theorem}\label{th:bias}
	Consider any non-degenerate \footnote{This excludes estimands that are constant, i.e that don't depend on $\Y(\A)$ and $\Y(\B)$.} estimand $\theta(\A, \B)$, and any assignment mechanism $\R$
	such that $\sprob_\R(\Z = \A) = \sprob_\R(\Z=\B) = 0$. There exists no unbiased estimator of 
	$\theta(\A,\B)$ under $\R$. 
\end{theorem}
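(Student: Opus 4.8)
The plan is to argue by contradiction through an identification-style argument. The crucial observation is that the estimand $\theta(\A,\B)=g(\VYA,\VYB)$ depends only on the potential outcomes under the two uniform assignments $\A$ (all units to A) and $\B$ (all units to B), yet the hypothesis $\sprob_\R(\Z=\A)=\sprob_\R(\Z=\B)=0$ guarantees that $\R$ never realizes either of these assignments. Since potential outcomes under distinct assignment vectors are completely unconstrained under arbitrary interference, I can change $\VYA$ and $\VYB$---and hence the estimand---without changing anything the experimenter could possibly observe.

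First I would make explicit what an estimator is allowed to depend on. Under the design-based paradigm, the data generated by a realized assignment $\Z$ consist of the pair $(\Z,\{Y_i(\Z)\}_i)$, so any valid estimator is a (measurable) function $\hat{\theta}(\Z)=\hat{\theta}(\Z,\{Y_i(\Z)\}_i)$ of that pair alone; it cannot reference the never-observed quantities $\VYA$ or $\VYB$. Writing out the expectation,
\[
\sexp[\hat{\theta}(\Z)]=\sum_{\Z\in\calZ}\sprob_\R(\Z)\,\hat{\theta}(\Z,\{Y_i(\Z)\}_i),
\]
and discarding the two zero-probability terms at $\Z=\A$ and $\Z=\B$, we see that $\sexp[\hat{\theta}(\Z)]$ is a function of the potential outcomes $\{\VYZ\}_{\Z\neq\A,\B}$ only, and is entirely insensitive to the values of $\VYA$ and $\VYB$.

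The second step is to exploit non-degeneracy. Because $g$ is not constant in $(\VYA,\VYB)$, I can select two configurations of the full potential-outcome table that agree on every coordinate with $\Z\neq\A,\B$ (and lie inside the admissible cube of Assumption~\ref{bounded-outcomes}) but assign different values to $(\VYA,\VYB)$ so that the estimand takes two distinct values. Under both configurations the distribution of the observed data is identical, so $\sexp[\hat{\theta}(\Z)]$ returns a single common value; unbiasedness in the sense of Definition~\ref{def:unbiased}, however, would require this one number to equal two different estimands at once, which is impossible. This contradiction shows no unbiased estimator exists.

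The only delicate point is the first step: one must argue carefully that an estimator is a function of the realized data and therefore genuinely cannot access $\VYA$ or $\VYB$ when those assignments never occur. Once this is pinned down, the rest is immediate, with boundedness entering only as the minor requirement that the two perturbed configurations be chosen inside $(0,M)^N$, which is always possible since a non-constant $g$ must vary somewhere on its open domain.
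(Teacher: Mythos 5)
Your proposal is correct and follows essentially the same route as the paper: both arguments rest on the observation that, since $\sprob_\R(\A)=\sprob_\R(\B)=0$, the expectation $\sexp[\hat{\theta}(\Z)]$ is a function of the potential outcomes at assignments other than $\A$ and $\B$ only, while a non-degenerate $g(\VYA,\VYB)$ must vary with $\VYA$ or $\VYB$, which are free parameters under arbitrary interference. Your two-configuration perturbation is just a slightly more explicit formalization of the paper's statement that ``the LHS does not depend on $Y(\A)$ or $Y(\B)$, so the RHS must not either,'' so no further comparison is needed.
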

If the design assigns non zero probability to the treatment allocation vectors $\A$ and $\B$, then unbiased estimators 
exist for a restricted class of estimands. The following proposition considers the case of the \BD, as
a concrete example:
\begin{proposition}\label{prop:bias-extended}
	Consider the Bernoulli Design. If the estimand is of additive form 
	$\theta(\A,\B) = \theta_1(\A) + \theta_2(\B)$, then unbiased estimators are of the form 
	\begin{equation*}
		\hat{\theta}(\Z) = C(\Z) + 2^N \iv(\Z = \A)\theta_1(\A) + 2^N\iv(\Z = \B)\theta_2(\B)
	\end{equation*}
	where $C(\Z)$ does not depend on any potential outcomes 
	 \footnote{This term may incorporate covariates, as with \textit{model-assisted} 
	estimators. In the
	absence of external information, we focus on the case $C(\Z) = 0$.} and satisfies:
	\begin{equation*}
		\sum_{\Z\in\mathcal{Z}} C(\Z) = 0
	\end{equation*} 
	For other types of estimands $\theta(\A,\B)$, there exist no unbiased estimators.
\end{proposition}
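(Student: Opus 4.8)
The plan is to turn the unbiasedness requirement into a functional identity in the potential outcomes and then solve it by separation of variables. Under the \BD every assignment receives probability $2^{-N}$, and an estimator that observes the realized assignment $\Z$ also observes the entire outcome vector $\Y(\Z)$; hence the most general estimator can be written $\hat{\theta}(\Z) = h_\Z(\Y(\Z))$ for arbitrary functions $h_\Z$, one per assignment. By Definition~\ref{def:unbiased}, the requirement $\sexp[\hat{\theta}(\Z)] = \theta_1(\A) + \theta_2(\B)$ must hold for \emph{all} values of the potential outcomes, so I treat the blocks $\{\Y(\Z)\}_{\Z \in \calZ}$ as free variables (each ranging over the open box permitted by Assumption~\ref{bounded-outcomes}) and read unbiasedness as the identity
\begin{equation*}
    2^{-N} \sum_{\Z \in \calZ} h_\Z(\Y(\Z)) = \theta_1(\A) + \theta_2(\B),
\end{equation*}
required to hold simultaneously for every choice of the blocks $\{\Y(\Z)\}$.

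First I would exploit that the left-hand side is a sum in which each summand depends on a single, disjoint block. Fixing all blocks at arbitrary reference values and perturbing only $\Y(\Z_0)$ for some $\Z_0 \notin \{\A,\B\}$ leaves the right-hand side untouched, forcing $h_{\Z_0}$ to be constant; I name this constant $C(\Z_0)$. Next, perturbing only $\Y(\A)$ with every other block frozen shows that $2^{-N} h_\A(\Y(\A))$ must equal $\theta_1(\A)$ up to an additive constant, giving $h_\A(\Y(\A)) = 2^N \theta_1(\A) + C(\A)$; the symmetric argument on $\Y(\B)$ yields $h_\B(\Y(\B)) = 2^N \theta_2(\B) + C(\B)$. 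Reassembling the three cases produces exactly the claimed form
\begin{equation*}
    \hat{\theta}(\Z) = C(\Z) + 2^N \iv(\Z = \A)\theta_1(\A) + 2^N \iv(\Z = \B)\theta_2(\B),
\end{equation*}
with $C(\Z)$ free of the potential outcomes. Substituting this back into the identity collapses the outcome-dependent terms against the right-hand side and leaves $2^{-N}\sum_\Z C(\Z) = 0$, i.e.\ $\sum_{\Z} C(\Z) = 0$, which completes the characterization in the additive case.

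To handle the final sentence I would rerun the same separation argument without assuming additivity. The perturbation steps show that the left-hand side can depend on $\Y(\A)$ only through $h_\A$ and on $\Y(\B)$ only through $h_\B$, and on no other block; feeding the solved forms of $h_\A$ and $h_\B$ back into the identity forces
\begin{equation*}
    \theta(\A,\B) = f(\Y(\A)) + g(\Y(\B)) + \mathrm{const}
\end{equation*}
for some functions $f,g$ --- that is, $\theta$ must be additively separable in the two blocks. Consequently, whenever $\theta(\A,\B)$ is \emph{not} of additive form the identity has no solution and no unbiased estimator exists, which is the remaining claim.

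The step I expect to be the main obstacle is justifying the separation of variables cleanly: I must argue that the blocks $\{\Y(\Z)\}$ can genuinely be varied one at a time, which rests on reading Definition~\ref{def:unbiased} as a universal quantifier over all admissible potential-outcome tables and on noting that Assumption~\ref{bounded-outcomes} only restricts each block to an open set, so single-block perturbations remain available and no continuity or linearity of the $h_\Z$ is needed. A secondary point worth stating explicitly is that, because under arbitrary interference observing $\Z$ reveals the whole vector $\Y(\Z)$, allowing $h_\Z$ to be an arbitrary function of that vector is the genuinely most general estimator, so the characterization is exhaustive rather than restricted to some convenient subclass.
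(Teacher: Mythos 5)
Your proof is correct and follows essentially the same route as the paper's: both treat the blocks $\{\Y(\Z)\}_{\Z\in\calZ}$ as independent free parameters and run a separation-of-variables argument on the unbiasedness identity to force $h_{\Z}$ to be constant off $\{\A,\B\}$, to pin down $h_{\A}$ and $h_{\B}$ up to additive constants, and to conclude that non-additive estimands admit no unbiased estimator. Your organization (one function $h_{\Z}$ per assignment, single-block perturbations, then back-substitution to get $\sum_{\Z}C(\Z)=0$) is a somewhat cleaner packaging of the same idea, not a different argument.
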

Taken together, Theorem~\ref{th:bias} and Proposition~\ref{prop:bias-extended} formalize a 
simple idea: if $\Z \neq \Z'$, then $\Y(\Z')$ is completely non-informative for $\Y(\Z)$. In particular,
the only assignments which, if observed, could provide information about the estimand $\theta(\A, \B)$
are $\Z=\A$ and $\Z=\B$.  The estimator of Proposition~\ref{prop:bias-extended} reflects this by 
evaluating to 0 for any uninformative assignment (that is, $\Z \not \in \{\A, \B\}$), and assigning large
weight to the only two informative assignments ($\Z=\A$ and $\Z=\B$). This is a known case of 
failure for this kind of Horvitz-Thompson estimators \citep{basu2011essay}.
%[intuitively: the only assignments that are informative for... are]
%The seemingly innocuous unbiased estimator allowed by the 
%\BD is in fact highly deficient: it takes the value $\hat{\theta}=0$ for all of the $2^N-2$ assignments $\Z$ that 
%are not $\A$ or $\B$, rendering it virtually useless. 
The following example shows the implications of 
Theorem~\ref{th:bias} and Proposition~\ref{prop:bias-extended} when the estimand is the average 
total treatment effect:
\begin{example}
Consider the following estimand:
\begin{equation*}
	\theta(\A, \B) = \overline{Y}(\A) - \overline{Y}(\B)
\end{equation*}	
and let $0 < N_1 < N$. $\theta(\A,\B)$ has the additive form of 
Proposition~\ref{prop:bias-extended} with $\theta_1(\A) = \overline{Y}(\A)$ and
$\theta_2(\B) = -\overline{Y}(\B)$, so under \BD and in the absence of external
information (that is, setting $C(\Z) = 0$ for all $\Z$), the only unbiased estimator
of $\theta(\A, \B)$ is:
\begin{equation*}
\hat{\theta}(\Z) = \begin{cases}
	\overline{Y}(\A) & \mbox{ if } \Z = \A \\
	-\overline{Y}(\B) & \mbox{ if } \Z = \B \\
	0 & \mbox{otherwise}
\end{cases}
\end{equation*}
Theorem~\ref{th:bias}, however, states there exist no unbiased 
estimators for this estimand under \CRD and \CBD, both of which
assign zero probability to assignments $\A$ and $\B$.
\end{example}
Finite sample bias is not uncommon in statistical applications, and is generally acceptable
if it can be traded for a large reduction in variance and vanishes as the sample size increase. The 
next theorem studies this tradeoff for both the \BD and \CRD by looking at the 
Mean Squared Error (MSE):
\begin{theorem}\label{th:mse}
	Consider any estimand $\theta(\A,\B) = g(\VYA, \VYB)$, and suppose that $g$ is onto 
	\footnote{see appendix for technical details} [0, M].
	Then for all sample size $N$ and estimator $\hat{\theta}(\Z)$ there exist potential 
	outcomes satisfying Assumption~\ref{bounded-outcomes} such that:
	\begin{equation*}
		MSE(\hat{\theta}, \theta) \geq \frac{M^2}{8}
	\end{equation*}
	where the MSE is taken under \CRD or \BD.
\end{theorem}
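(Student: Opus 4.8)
The plan is to prove the bound by a two-point (Le Cam-style) minimax argument that exploits the fact already emphasized after Theorem~\ref{th:bias}: the estimand $\theta(\A,\B)=g(\VYA,\VYB)$ depends on the potential outcomes \emph{only} through $\VYA$ and $\VYB$, yet under \CRD (with $0<N_A<N$) the assignments $\A$ and $\B$ are never realized, and under \BD they are realized only with the vanishing probability $2^{-N}$. The idea is therefore to exhibit two configurations of potential outcomes that the data cannot tell apart, but whose estimands lie as far apart as the range of $g$ allows; no single estimator can then be simultaneously close to both, which forces a large error on at least one.

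First I would use that a design-based estimator is a function of the observed data $(\Z,\{Y_i(\Z)\}_i)$ alone, and in particular cannot depend on the values $Y_i(\Z')$ for any unobserved $\Z'\neq\Z$. Since $g$ is onto $[0,M]$, I can choose potential-outcome vectors $(\mathbf{a}_0,\mathbf{b}_0)$ and $(\mathbf{a}_1,\mathbf{b}_1)$ with $g(\mathbf{a}_0,\mathbf{b}_0)$ arbitrarily close to $0$ and $g(\mathbf{a}_1,\mathbf{b}_1)$ arbitrarily close to $M$, and build two potential-outcome tables that share the same $\VYZ$ for every $\Z\notin\{\A,\B\}$ but set $(\VYA,\VYB)$ to the first pair in configuration $0$ and to the second pair in configuration $1$. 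The associated estimands $\theta_0,\theta_1$ then satisfy $|\theta_1-\theta_0|$ arbitrarily close to $M$; reconciling this with the strict inequalities of Assumption~\ref{bounded-outcomes} is precisely the boundary ``technical detail'' deferred to the appendix.

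The crux is the indistinguishability bookkeeping. Let $E$ be the event $\{\Z\in\{\A,\B\}\}$. On $E^c$ the two configurations generate identical observed data, so any estimator $\hat\theta$ is literally the same random variable under both and has the same distribution there. Bounding $MSE_j\geq \sexp[(\hat\theta-\theta_j)^2\,\mathbf{1}_{E^c}]$ and invoking the elementary pointwise inequality $(t-\theta_0)^2+(t-\theta_1)^2\geq \tfrac12(\theta_1-\theta_0)^2$, I would sum over the two configurations to obtain
\[
MSE_0+MSE_1\;\geq\;\tfrac12\,(\theta_1-\theta_0)^2\,\bigP(E^c),
\]
and hence $\max(MSE_0,MSE_1)\geq \tfrac14(\theta_1-\theta_0)^2\,\bigP(E^c)$. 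Under \CRD one has $\bigP(E^c)=1$, which already yields $M^2/4$; under \BD one has $\bigP(E^c)=1-2^{1-N}\geq\tfrac12$ for $N\geq2$, yielding $\max\geq M^2/8$. Taking the weaker of the two constants gives the stated bound uniformly in $N$.

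The main obstacle is conceptual rather than computational: one must carefully formalize what ``indistinguishable to the estimator'' means, namely that altering $(\VYA,\VYB)$ alone leaves the distribution of the observed data on $E^c$ unchanged, so the estimator is the same random variable under both tables there. The only genuinely delicate quantity is the \BD term $\bigP(E)=2^{1-N}$: it is what blocks the clean $M^2/4$ bound and introduces the factor $1/2$, and it is the reason the theorem must state $M^2/8$ rather than $M^2/4$. The remaining issues—realizing near-extremal values of $g$ under the open-interval constraint of Assumption~\ref{bounded-outcomes}, and dispatching the degenerate small-$N$ cases—are routine and naturally relegated to the appendix.
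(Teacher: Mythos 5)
Your proof is correct and rests on the same key observation as the paper's: data observed at any $\Z\notin\{\A,\B\}$ carry no information about $(\VYA,\VYB)$, so surjectivity of $g$ lets one move the estimand by nearly $M$ without changing what the estimator sees off that event, and the factor $\bigP(E^c)\geq 1/2$ under the Bernoulli design yields the same $M^2/8$ constant. The only cosmetic difference is that the paper sets the off-support potential outcomes to a constant so that the estimator is literally a single number $C$ on $E^c$ and then picks whichever of $g=0$ or $g=M$ is farther from $C$, whereas you run the equivalent two-point averaging argument, which does not require that constancy and is slightly more robust to estimators depending on $\Z$ itself.
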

In the previous section, we showed that under Assumption~\ref{no-interference} there exists an estimator
for the average total treatment effect which was unbiased, had a variance of order $O(1/N)$ and thus an MSE of
order $O(1/N)$. These properties of the estimator hold for all possible values of the potential outcomes.
Theorem~\ref{th:mse} states that under arbitrary interference, this is no longer the case: whether one is
conducting a small study or a large scale experiment on a social network, no estimator will perform well for
all values of the potential outcomes. A direct consequence of the theorem is that there exists no consistent
estimator under arbitrary interference.

\begin{remark}
	The condition on $g$ in Theorem~\ref{th:mse} can be relaxed, without affecting the main idea behind the theorem. It is useful
	to note that the difference in means estimand $\theta(\A, \B) = \overline{Y}(\A) - \overline{Y}(\B)$ 
	satisfies this condition.
\end{remark}
%

%%%%%%%%%%%%%%%
\section{Caution when structuring the interference mechanism}
\label{section:structure}
%%%%%%%%%%%%%%%

Assumptions restricting the interference mechanism (usually involving a notion of locality) can alleviate some 
of the issues mentioned in the previous section
\citep{hudgens2012toward, sinclair2012detecting, basse2016, ugander2013graph}, and allow the existence
 of unbiased estimators for a variety of estimands \citep{aronow2013estimating,ugander2013graph}. 
 Even then, however, basic intuition from the no-interference case remains misleading. 
When considering such assumptions, it is important to assess their impact on both the bias
and the variance of estimators, especially in settings which naturally afford large sample sizes, where
it is tempting to focus mostly on the bias, and ignore potential issues with the variance.
%When considering assumptions on the interference structure, it is important to assess their impact
%on both the bias and the variance of estimators. In settings which naturally afford large sample sizes, it is
%tempting to focus mostly on the problem of the bias, ignoring the variance. 
We illustrate this danger with a realistic example in which a local interference assumption
leads to an estimator that is unbiased, but whose variance explodes in large sample settings \citep{aronow2013estimating,ugander2013graph}.
%In the rest of this
%section, we show that with certain assumptions on the interference structure it is possible to obtain
%estimators which are unbiased, but whose variance does not vanish in large samples ().

If $d(i,j)$ is a measure of the distance between units $i$ and $j$, it is often plausible to assume that
only units in the \textit{k-step neighborhood} $\Ni_i^{(k)} = \{j=1\ldots N : d(i, j) \leq k\}$ of $i$  can interfere with its outcome \citep{ugander2013graph,coppock2013design}. Formally,
\begin{assumption}[k-local interference]\label{asst:local}
	\begin{equation*}
	Y_i(\Z) = Y_i(\Z_{\Ni_i^{(k)}})
	\end{equation*}	
\end{assumption}
where $\Z_{\Ni_i^{(k)}}$ denotes the sub vector of $\Z$ containing the assignments of the k-step neighbors of unit $i$. This assumption gives a special role to the following subsets of $\calZ$:
%
%\begin{definition}[\Z-exposed assignments for $i$]\label{def:exposed-set}
\begin{equation}
	\calZ_i^{(k)}(\A) = \{\Z: \Z_{\Ni_i^{(k)}} = \A_{\Ni_i^{(k)}} \} 
%	\quad \mbox{ and }
%	\calZ_i^{(k)}(\B) = \{\Z: \Z_{\Ni_i^{(k)}} = \B_{\Ni_i^{(k)}} \}
\end{equation}
%\end{definition}
%
where $\calZ_i^{(k)}(\B)$ is defined similarly. The set $\calZ_i^{(k)}(\A)$ is the set of assignments 
in which unit $i$ and all its $k-step$ neighbors are assigned to $A$. Consider the following Horvitz-Thompson estimator for the average causal effect:
 \begin{flalign*}
	 \hat{\theta}(\Z) &= \hat{\theta}_{\textbf{A}}(\Z) - \hat{\theta}_\textbf{B}(\Z) \\
	 &= \frac{1}{N} \sum_i \frac{\iv(Z \in \calZ_i^{(k)}(\A))}{P(\Z\in \calZ_i^{(k)}(\A)} Y_i(\A) \\
	 &- \frac{1}{N} \sum_i \frac{\iv(Z \in \calZ_i^{(k)}(\B))}{P(\Z\in \calZ_i^{(k)}(\B)} Y_i(\B)
\end{flalign*}
which can be shown to be unbiased under $k-local$ interference \citep{aronow2013estimating}. Under the 
Bernoulli Design, the variance has the following expression:
\begin{proposition}
\begin{equation*}
	\svar[\hat{\theta}] = \svar[\hat{\theta}_{\textbf{A}}] + \svar[\hat{\theta}_{\textbf{B}}] - 2 \scov[\hat{\theta}_{\textbf{A}}, \hat{\theta}_{\textbf{B}}]
\end{equation*}
where:
\begin{flalign*}
	\svar[\hat{\theta}_{\textbf{A}}] &= \frac{1}{N^2}\bigg[ \sum_i (2^{|\Ni^{(k)}_i|}-1) Y_i(\A)^2 \\
	&+ \sum_i\sum_{j\neq i}  (2^{|\Ni_i^{(k)}(\A) \cap \Ni_j^{(k)}(\A)|}-1) Y_i(\A) Y_j(\A)\bigg]
\end{flalign*}
and similarly for $\svar[\hat{\theta}_{\textbf{B}}]$, and:
\begin{flalign*}
	\scov[\hat{\theta}_{\textbf{A}}, \hat{\theta}_{\textbf{B}}] &= -\frac{1}{N^2} \bigg[ \sum_i Y_i(\A) Y_i(\B) \\
	&+ \sum_i\sum_{j\neq i} Y_i(\A) Y_j(\B) \iv(|\Ni^{(k)}_i\cap\Ni^{(k)}_j|>0)\bigg]
\end{flalign*}
\end{proposition}
The variance of the unbiased estimator thus depends explicitly on network quantities. To make things even more
explicit, the next example focuses on a specific family of networks and shows that whether the estimator is 
consistent depends on a single parameter of the network family:
%and shows that theIn the next example, we illustrate the fact that the consistency of the estimator depends on the 
%network model used. 
%
\begin{example}\label{example:inconsistent}
For this example, we will assume that:
\begin{equation*}
	0 < K < \Y(\Z) < M
\end{equation*}
and will focus on the case where $k=1$. If we model the network as an Erdos-Renyi graph with probability $p$ of connection between nodes, we can show that:
\begin{flalign*}
\sexpg\bigg[\svar[\hat{\theta}]\bigg] &=  O\bigg(\frac{2(1+p)^{N-1} -1}{N}\\
&+ (1+3p)(1+p^2)^{(N-2)} - 1 \\
&+ \frac{1}{N} + \frac{N(N-1)}{2} (1 - (1-p)(1-p^2)^{N-2}) \bigg)
\end{flalign*}
The behavior of this quantity depends on the parameter $p$, which governs the sparsity of the network. We 
show in appendix that 
\begin{itemize}
	\item if $p < \frac{1}{N}$, we have: $\sexpg\bigg[\svar[\hat{\theta}]\bigg] \leq O(1/N)$,% making the estimator consistent.
	\item if $p \geq 1/\sqrt{N}$, we have: $\sexpg\bigg[\svar[\hat{\theta}]\bigg] \geq \frac{4e^{\frac{N-1}{\sqrt{N}}}}{N}K^2$, %making the estimator 
	%inconsistent.
\end{itemize}

In this case, the expected variance of the estimator goes to zero if $p < 1/N$, but not if $p \geq 1 / \sqrt{N}$. 
\end{example}

It is straightforward to construct unbiased estimators under most forms of localized interference, by relying
on the popular Horvitz-Thompson estimators \citep{aronow2013estimating}. 
%What Example~\ref{example:inconsistent} shows is that the 
 On the other hand, checking that the variance of such estimators converges -- even under simple forms of 
 interference -- can be difficult. Yet, Example~\ref{example:inconsistent} shows the perils of neglecting
this arduous task, especially in large samples.

%%%%%%%%%%%%%%%%
\section{Discussion}
\label{section:discussion}
%%%%%%%%%%%%%%%%

\subsection{Broader class of estimands}
%%%%%%%%%%%%

The estimands of the form $\theta(\A, \B)$, which we have considered in this article, are relevant
whenever the purpose of the experiment is to decide which of treatment $\A$ or treatment $\B$ would
be best if applied to the entire population. This is the kind of question social platforms care about when
they experiment with new products or features \citep{eckles2016estimating,gui2015network}. Other scenarios,
however, would call for different kinds of estimands. In epidemiology for instance, the question of interest is
often not which of two vaccines would be better if applied to the whole population, but which proportion of the 
population should be vaccinated \citep{hudgens2012toward}. 
%More generally, very general estimands have
%been proposed in the literature \cite{hudgens2012toward,aronow2013estimating,coppock2013design}. 
It should be clear that the fundamental problems raised by arbitrary 
interference don't vanish when more complex
estimands are considered, although results analogous to Section~\ref{section:all-interferences} do become
harder to formulate as we illustrate next. For the rest of this section, consider treatment A to be an 
``active treatment'', while treatment
B is a ``control'' or ''no treatment'', and define the 
\textit{average primary causal effect}: %\cite{toulis2013estimation}:
\begin{equation*}
	\theta = \frac{1}{N} \sum_i Y_i(Z_i=A, \Z_{-i}=\B)
\end{equation*}
We can state the equivalent of Proposition~\ref{prop:bias-extended}:
\begin{proposition}\label{prop:bias-extended-broader}
	Denote by $\Z^{(i)}$ the assignment such that $Z_i=A$ and $Z_j = B$ for all $j\neq i$. Under 
	arbitrary interference, the only unbiased estimators of $\theta$ under the \BD are of the 
	form: 
%	Then under
%	the \BD, the only estimators of the average primary causal effect are of the form:
	%
	\begin{equation}\label{eq:unbiased-primary}
	\hat{\theta}(\Z) = C(\Z) + \frac{2^N}{N} \sum_i \iv(\Z = \Z^{(i)}) Y_i(\Z^{(i)})
	\end{equation}
	where $\sum_\Z C(\Z) = 0$.
\end{proposition}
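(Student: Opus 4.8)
The plan is to mirror the argument behind Proposition~\ref{prop:bias-extended}, exploiting the fact that under arbitrary interference any admissible estimator $\hat{\theta}(\Z)$ is a function only of the data actually observed under $\Z$, namely the outcomes $\{Y_j(\Z)\}_{j=1}^N$ realized at that assignment; it can never depend on a potential outcome $Y_j(\Z')$ at a different assignment $\Z'\neq\Z$, since those are never observed. First I would write the unbiasedness requirement explicitly for the \BD, where every $\Z\in\calZ$ receives probability $2^{-N}$:
\begin{equation*}
\sexp[\hat{\theta}(\Z)] = \frac{1}{2^N}\sum_{\Z\in\calZ}\hat{\theta}(\Z) = \frac{1}{N}\sum_i Y_i(\Z^{(i)}) = \theta,
\end{equation*}
and insist that this hold as an identity in the $N\cdot 2^N$ free variables $\{Y_i(\Z)\}_{i,\Z}$, which are a-priori unrelated under arbitrary interference.

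The heart of the argument is a separation-of-variables step. I would fix a single assignment $\Z_0$ and compare the identity at two potential-outcome tables that agree everywhere except in the block $\{Y_j(\Z_0)\}_j$. On the left side only the term $2^{-N}\hat{\theta}(\Z_0)$ can change, because every other $\hat{\theta}(\Z)$ depends on outcomes realized at some $\Z\neq\Z_0$; on the right side only those $Y_k(\Z^{(k)})$ with $\Z^{(k)}=\Z_0$ can change. Two cases then arise. If $\Z_0\notin\{\Z^{(1)},\ldots,\Z^{(N)}\}$, the right side is unaffected, so $\hat{\theta}(\Z_0)$ must be constant in its outcome block; hence $\hat{\theta}(\Z_0)=C(\Z_0)$ for a quantity free of potential outcomes. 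If $\Z_0=\Z^{(i)}$ for the unique $i$ it selects, then only $Y_i(\Z^{(i)})$ appears on the right (with coefficient $1/N$), while $Y_j(\Z^{(i)})$ for $j\neq i$ does not: the $j\neq i$ perturbations force $\hat{\theta}(\Z^{(i)})$ to be independent of those coordinates, and matching the residual dependence on $Y_i(\Z^{(i)})$ forces $\hat{\theta}(\Z^{(i)})-\tfrac{2^N}{N}Y_i(\Z^{(i)})$ to be constant, say $C(\Z^{(i)})$. Together these two cases yield exactly the claimed form~\eqref{eq:unbiased-primary}.

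Finally I would substitute the derived form back into the unbiasedness identity: summing with weight $2^{-N}$, the outcome-dependent pieces collapse to $\tfrac1N\sum_i Y_i(\Z^{(i)})$ exactly (each $\Z^{(i)}$ is hit by a single assignment), leaving the residual condition $2^{-N}\sum_{\Z}C(\Z)=\theta-\theta=0$, i.e. $\sum_\Z C(\Z)=0$; a direct check that any estimator of this form integrates to $\theta$ supplies the converse. The hard part will be making the coefficient-matching rigorous without assuming $\hat{\theta}$ is smooth or linear in the outcomes, which the perturbation/difference argument handles by equating contributions that depend on disjoint sets of free variables and concluding each is constant. A secondary point to verify is that $\Z^{(1)},\ldots,\Z^{(N)}$ are pairwise distinct, so that $\iv(\Z=\Z^{(i)})$ selects at most one unit per assignment and the two cases above are genuinely exhaustive and disjoint.
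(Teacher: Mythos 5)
Your proposal is correct and follows essentially the same route as the paper's proof: both exploit the fact that under arbitrary interference the outcome blocks $\{Y_j(\Z)\}_j$ at distinct assignments are free parameters, and both force the estimator to be constant at uninformative assignments and affine in $Y_i(\Z^{(i)})$ with coefficient $2^N/N$ at each $\Z^{(i)}$ by matching which free variables each side of the unbiasedness identity can depend on. Your perturbation framing and the explicit checks of the converse and of the distinctness of the $\Z^{(i)}$ are welcome tightenings, but they do not change the argument.
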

The statement of Proposition~\ref{prop:bias-extended-broader} is more complex, but the problems it highlights
are identical. Similar lines of reasoning hold for more complex estimands under arbitrary interference.

\subsection{Effective treatments and informative sets}
%%%%%%%%%%%%

In Section~3, the outcome of unit $i$ depends on the assignment of units in its k-step neighborhood 
$\mathcal{N}_i^{(k)}$. This concept can be generalized by defining the \textit{reference group} of user $i$ 
\citep{manski2013identification} to be the smallest set of units $G_i \subset \{1, \ldots, N\}$ such that:
\begin{equation}\label{eq:exposure}
	Y_i(\Z) = Y(\Z_{G_i}) \quad \forall \Z
\end{equation}
This generalizes Equation~\ref{no-interference} and suggests that $Z_{G_i}$, called the \textit{effective
treatment} of unit $i$ \citep{manski2013identification,aronow2013estimating}, is more relevant than its treatment 
$Z_i$. Under Assumption~\ref{no-interference} (no interference) 
the treatment and effective treatment of unit $i$ are 
the same $\Z_{G_i} = Z_i$, while under under Assumption~\ref{asst:local}, the effective treatment of unit $i$ 
encompasses the assignment of its k-step neighbors $\Z_{G_i} = \Z_{\Ni_i^{(k)}}$. Under arbitrary 
interference the
 effective treatment of unit $i$ is the entire assignment vector, $\Z_{G_i} = \Z$. In the language of 
 Section~\ref{section:all-interferences}, an assignment vector $\Z'$ will be informative for the outcome $Y_i(\Z)$
if it results in the same effective treatment. We thus define the \textit{informative set} for $Y_i(\Z)$:
\begin{equation*}
	\calZ_i(\Z) = \{\Z'  \, : \, \Z'_{G_i} = \Z_{G_i} \quad and \quad \sprob_\R(\Z') > 0\}
\end{equation*}
Under Assumption~\ref{no-interference}, the informative set for any outcome $Y_i(\Z)$ contains every
assignments $\Z'$ such that $Z'_i = Z_i$, while under arbitrary interference, its informative set only contains
the assignment $\Z$. The next section explores how the concepts of effective treatment and informative sets 
capture the important changes implied by different interference structures. 

\subsection{Understanding interference structures}
%%%%%%%%%%%%%

Although we have defined effective treatments and informative sets for abstract designs $\R$, we focus the
rest of the discussion on the Bernoulli Design which captures the salient features of the problem, while 
simplifying the exposition. Denote by $E_i = |\{\Z_{G_i}\}_\Z|$ the number of effective treatments for unit
$i$ and $S_i(\Z) = |\calZ_i(\Z)|$ the size of the informative set for the outcome $Y_i(\Z)$. We call 
$F_i(\Z) = S_i(\Z) / |\calZ|$ the fraction of informative assignments for $Y_i(\Z)$. Under the Bernoulli
Design, we show in Appendix that:
\begin{equation}
	\forall \,i,\, \Z \quad F_i(\Z) = F_i  \quad \mbox{and} \quad E_i = \frac{1}{F_i} 
\end{equation}
establishing a connection between the number of effective treatments and the fraction of informative 
assignments for all units $i$. Table~\ref{table:comparisons} illustrates this relation for different interference
structures.
 \begin{table}[h!]
 \centering
 \begin{tabular}{|c|c|c|c|}
 \hline
 interference & no & 1-local & arbitrary\\
 \hline
 $E_i$ & 2 & $2^{|\Ni_i|}$ & $2^N$ \\
 $F_i(\Z)$& $\frac{1}{2}$ & $\frac{1}{2^{|\Ni_i|}}$ & $\frac{1}{2^N}$ \\
 \hline
 \end{tabular}
 \medskip
 \caption{Expected number of effective treatments and fraction of informative assignments for different 
 interference structures, under the \BD}
 \label{table:comparisons}
 \end{table}
We see that stronger assumptions on the interference structure tend to reduce the number of effective
treatments or, equivalently, increase the fraction of relevant sets. Applying these insights to 
Example~\ref{example:inconsistent}, Table~\ref{table:comparisons-2} contrasts the behaviors of  
$\sexpg[E_i]$ and $\sexpg[F_i(\Z)]$ as $N \rightarrow \infty$ for sparser networks ($p = 1/N$) and denser
networks ($p=1/\sqrt{N}$). For sparser networks, the expected number of effective treatments 
converges while the fraction of informative assignments converges to a strictly
positive number. In contrast, denser networks ($p=1/\sqrt{N}$) lead to an infinite number of effective treatments
in expectation, as $N \rightarrow \infty$.
\begin{table}[h!]
\centering
\begin{tabular}{|c|c|c|}
\hline
 & p = 1/N & $p = 1/\sqrt{N}$ \\
 \hline
$\sexpg[E_i]$ & 2e & $\infty$ \\ 
$\sexpg[F_i(\Z)]$ & $\frac{1}{2e^{1/2}}$ & 0\\
\hline
\end{tabular}
\medskip
\caption{Asymptotic expected number of effective treatments and fraction of $\Z-exposed$ sets for the two
different Erdos-Renyi specifications of Example~\ref{example:inconsistent}, under 1-local interference.}
\label{table:comparisons-2}
\end{table}
%%
%Taken together, Table~\ref{table:comparisons} and Table~\ref{table:comparisons-2} provide some insights 
%into the asymptotic behavior of our estimator under the two scenarios considered in 
%Example~\ref{example:inconsistent}. 
Under assumption~\ref{asst:local}, a unit's outcome depends on the 
the assignment of its neighbors and so the fewer neighbors it has, the closest it is to the no-interference 
scenario. This intuition is supported by noticing that the column of Table~\ref{table:comparisons-2}
corresponding to the sparser networks ($p=1/N$) is closer to the column of Table~\ref{table:comparisons}
corresponding to the no-interference case, while the denser networks are closer to the arbitrary interference
case. Although none of the observations we made describe sufficient conditions for the consistency of our
estimator (see \cite{aronow2013estimating}), they provide an intuitive connection between its 
asymptotic variance and the number of effective treatments (or equivalently,
the fraction of informative assignments).

\subsection{Connection with the Incidental Parameter Problem}
%%%%%%%%%%%%%%%%%%%%%%%%

The focus of this paper is on design-based inference: we have considered the potential outcomes
as being fixed, and the randomness as coming from the assignment mechanism exclusively. In this section
only, we will switch viewpoints and model the potential outcomes. This alternate perspective which statisticians
tend to be more familiar with will hopefully provides additional insights into the problem.\\
Under the assumption of no interference, one might model the potential outcomes as follows:
\begin{equation*}
	(Y_i(A), Y_i(B)) \overset{iid}{\sim} \mathcal{N}\bigg( (\mu_A, \mu_B), \Sigma \bigg)
\end{equation*}
where to simplify, we assume that $\Sigma = I$. This way, there are only two parameters $\mu_A$ and $\mu_B$ to be estimated, regardless of the sample size $N$. When N increases, we have more and more units
to estimate the same number of parameters (assuming that the fraction of units in each treatment remains 
constant). This is the traditional asymptotic regime considered in statistics.

 If we allow arbitrary interference, however, the model naturally becomes:
\begin{equation*}
	\vec{Y}_i = \{Y_i(\Z)\}_{\Z\in\mathcal{Z}} \overset{iid}{\sim} \mathcal{N}\bigg( \vec{\mu}_{\mathcal{Z}}, \Sigma_{\mathcal{Z}} \bigg)
\end{equation*}
where we make the same simplification as above and assume $\Sigma = I$. The only unknowns are 
now the elements of $\vec{\mu}_{\mathcal{Z}}$, which is a vector of length $2^N$. This means that 
the number of parameters to estimate grows exponentially with the size of the problem. This scenario
is sometimes referred to as the incidental parameter problem \citep{neyman1948consistent}, and 
the challenges it poses are well known.\\

\bibliographystyle{plainnat}
\bibliography{ref}

\appendix
\section{Technical appendix}

\subsection{Variance terms in Section~2}
%%%%%%%%%%%%%%%%

\begin{flalign*}
	V_A &= \frac{1}{N-1} \sum_{i=1}^N\bigg(Y_i(A) - \overline{Y}(A)\bigg)^2 \\
	V_B &= \frac{1}{N-1} \sum_{i=1}^N\bigg(Y_i(B) - \overline{Y}(B)\bigg)^2 \\
	V_{\theta} &= \frac{1}{N-1} \sum_{i=1}^N \bigg( Y_i(A) - Y_i(B) - \theta \bigg)^2
\end{flalign*}

\subsection{Proof of theorems}
%%%%%%%%%%%%%%%%

%\paragraph{Proof of Theorem~1:}
%%
\begin{theorem}\label{th:bias}
	Consider any estimand $\theta(\A, \B) = g(\VYA, \VYB)$ that is not constant, and any 
	assignment mechanism $\R$ such that $\sprob_\R(\Z = \A) = \sprob_\R(\Z=\B) = 0$. 
	There exists no unbiased estimator of $\theta(\A,\B)$ under $\R$. 
\end{theorem}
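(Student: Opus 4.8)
The plan is to exploit the fact that unbiasedness is required to hold simultaneously for \emph{all} potential outcome tables $\{\VYZ\}_\Z$, while observing that the expectation $\sexp[\hat\theta(\Z)]$ simply cannot ``see'' the two coordinates $\VYA$ and $\VYB$ on which the estimand depends. The whole argument is one of degrees of freedom; there is no real calculation.

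First I would make explicit that, in the design-based framework, the estimator is a function of the observed data alone: when assignment $\Z$ is realized we observe only the outcome vector $\Y(\Z)$, so $\hat\theta$ is really a function $\hat\theta(\Z, \Y(\Z))$. Writing out the expectation over the design,
\begin{equation*}
	\sexp[\hat\theta(\Z)] = \sum_{\Z \in \calZ} \sprob_\R(\Z)\, \hat\theta(\Z, \Y(\Z)),
\end{equation*}
and using the hypothesis $\sprob_\R(\A) = \sprob_\R(\B) = 0$ to delete the two corresponding terms, yields
\begin{equation*}
	\sexp[\hat\theta(\Z)] = \sum_{\Z \neq \A, \B} \sprob_\R(\Z)\, \hat\theta(\Z, \Y(\Z)).
\end{equation*}
The key observation is that the right-hand side involves the outcome vectors $\Y(\Z)$ only for $\Z \neq \A, \B$; under arbitrary interference these are free coordinates of the potential outcome table, algebraically independent of $\VYA$ and $\VYB$. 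Hence $\sexp[\hat\theta(\Z)]$ is a quantity that does not depend on $\VYA$ or $\VYB$ at all.

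The contradiction then comes from a perturbation argument. Because $\theta = g(\VYA, \VYB)$ is non-constant, I can pick two admissible pairs $(\mathbf{a}_1, \mathbf{b}_1)$ and $(\mathbf{a}_2, \mathbf{b}_2)$ for $(\VYA, \VYB)$ --- both respecting Assumption~\ref{bounded-outcomes} --- with $g(\mathbf{a}_1, \mathbf{b}_1) \neq g(\mathbf{a}_2, \mathbf{b}_2)$. I would then fix all remaining coordinates $\{\Y(\Z) : \Z \neq \A, \B\}$ at some common admissible values and form two potential outcome tables that differ only in their $\A$ and $\B$ entries. By the previous display, $\sexp[\hat\theta(\Z)]$ takes the \emph{same} value on both tables, whereas $\theta$ takes two different values; so unbiasedness $\sexp[\hat\theta(\Z)] = \theta(\A,\B)$ cannot hold on both tables at once, and no unbiased estimator can exist.

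I expect essentially no computational obstacle. The only point requiring care --- and the genuine crux --- is the modeling step that pins down what information the estimator has access to: that $\hat\theta$ is a function of $(\Z, \Y(\Z))$ and that assignments of zero probability contribute nothing to the expectation. Once this is granted, the ``missing coordinates'' argument is immediate, and the non-degeneracy of $g$ supplies exactly the two tables needed to break unbiasedness.
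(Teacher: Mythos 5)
Your proposal is correct and follows essentially the same route as the paper's proof: both drop the zero-probability terms for $\A$ and $\B$ from the design expectation, observe that the remaining sum depends only on potential outcomes $\Y(\Z)$ with $\Z \notin \{\A,\B\}$, which are free of $\VYA$ and $\VYB$, and conclude that unbiasedness would force $g$ to be constant. Your explicit two-table perturbation is just a slightly more spelled-out version of the paper's ``the RHS must not depend on $Y(\A)$ or $Y(\B)$'' step.
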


\begin{proof}
The key insight behind the proof is that under arbitrary interference, if $\Z$ and $\Z'$ are 
two assignments such that $\Z \neq \Z'$, then $Y(\Z)$ and $Y(\Z')$ can be seen as two 
independent parameters. That is, the value of $Y(\Z)$ does not constrain that of $Y(\Z')$.
Denote by $p(\Z_0)$ the probability $\sprob_\R(\Z = \Z_0)$, and assume that there exists an
unbiased estimator $\hat{g}(Y(\Z))$. This means that:
\begin{equation*}
	\sum_{\Z \in \calZ - \{\A,\B\}}p(\Z)\hat{g}(Y(\Z)) = g(Y(\A), Y(\B))
\end{equation*}
since $p(\A) = p(\B) = 0$. Now the LHS does not depend on the independent 
parameters $Y(\A)$ or $Y(\B)$, and so the 
RHS must not depend on $Y(\A)$ or $Y(\B)$ either. But $RHS$ is only a function of these 
two parameters, so we must have: $g(Y(\A), Y(\B)) = C$, where $C$ is a constant. This violates
the hypothesis that $g$ is not degenerate. Both CRD and CRB satisfy $p(\A) = p(\B) = 0$, so the
result holds for both designs.
\end{proof}

%\paragraph{Proof of Proposition~1:}
%%
\begin{proposition}\label{prop:bias-extended}
	Consider the Bernoulli Design. If the estimand is of additive form 
	$\theta(\A,\B) = \theta_1(\A) + \theta_2(\B)$, then unbiased estimators are of the form 
	\begin{equation*}
		\hat{\theta}(\Z) = C(\Z) + 2^N \iv(\Z = \A)\theta_1(\A) + 2^N\iv(\Z = \B)\theta_2(\B)
	\end{equation*}
	where $C(\Z)$ does not depend on any potential outcomes and satisfies:
	\begin{equation*}
		\sum_{\Z\in\mathcal{Z}} C(\Z) = 0
	\end{equation*} 
	For other types of estimands $\theta(\A,\B)$, there exist no unbiased estimators.
\end{proposition}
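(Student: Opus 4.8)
The plan is to exploit the same independence-of-potential-outcomes insight used in the proof of Theorem~\ref{th:bias}, but to carry the bookkeeping further so as to obtain an exact characterization rather than a bare nonexistence statement. Under the \BD every assignment receives probability $2^{-N}$, and since only the realized outcomes are observable, any estimator decomposes as $\hat{\theta}(\Z) = \hat{g}_{\Z}(\Y(\Z))$, where the $\Z$-th piece depends on the potential outcomes solely through $\Y(\Z)$. Writing the unbiasedness requirement of Definition~\ref{def:unbiased} out explicitly then gives the identity
\begin{equation*}
	2^{-N} \sum_{\Z \in \calZ} \hat{g}_{\Z}(\Y(\Z)) = \theta(\A, \B),
\end{equation*}
which must hold for every admissible configuration of the potential outcomes $\{\Y(\Z)\}_{\Z}$.

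First I would run a separation-of-variables argument on this identity. Because the vectors $\{\Y(\Z)\}_{\Z}$ are logically unconstrained---each may be perturbed within the bounds of Assumption~\ref{bounded-outcomes} independently of the others---I can hold all but one of them fixed. For any $\Z \notin \{\A, \B\}$ the right-hand side does not move when $\Y(\Z)$ is varied, forcing the corresponding piece $\hat{g}_{\Z}$ to be constant; call this constant $C(\Z)$. This isolates the two informative assignments and reduces the identity to
\begin{equation*}
	2^{-N} \hat{g}_{\A}(\Y(\A)) + 2^{-N} \hat{g}_{\B}(\Y(\B)) + 2^{-N} \sum_{\Z \notin \{\A,\B\}} C(\Z) = \theta(\A, \B).
\end{equation*}

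The next step does double duty. The displayed identity asserts that the estimand equals a function of $\Y(\A)$ alone plus a function of $\Y(\B)$ alone plus a constant---that is, $\theta(\A, \B)$ is \emph{forced} to be of additive form. This immediately yields the final clause of the proposition: if $\theta$ is not additive, the identity is unsatisfiable and no unbiased estimator can exist. When $\theta$ is additive, $\theta(\A,\B) = \theta_1(\A) + \theta_2(\B)$, I would match the $\Y(\A)$-dependent parts to solve $\hat{g}_{\A}(\Y(\A)) = 2^N \theta_1(\A) + C(\A)$, and likewise $\hat{g}_{\B}(\Y(\B)) = 2^N \theta_2(\B) + C(\B)$; substituting back collapses the leftover scalar equation to $\sum_{\Z \in \calZ} C(\Z) = 0$. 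Reassembling the pieces recovers exactly the claimed form, and sufficiency (that every such estimator is indeed unbiased) follows by a direct computation of the expectation.

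I expect the one delicate point to be the rigorous justification of the separation step: I must argue that ``fixing all outcome vectors but one'' is legitimate, i.e.\ that the set of admissible potential outcomes is a product set, so that the univariate sections of the identity genuinely hold for all values. This is where the arbitrary-interference assumption does the essential work---it is precisely what guarantees that $\Y(\Z)$ and $\Y(\Z')$ impose no mutual constraints---and I would isolate it as an explicit lemma before invoking it, so that both the constancy of $\hat{g}_{\Z}$ on uninformative assignments and the additive decomposition of $\theta$ rest on the same clearly identified fact.
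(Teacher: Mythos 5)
Your proposal is correct and follows essentially the same route as the paper's own proof: both rest on the observation that under arbitrary interference the vectors $\{\Y(\Z)\}_{\Z}$ can be varied independently, so separating variables in the unbiasedness identity forces $\hat{g}$ to be constant on every $\Z \notin \{\A,\B\}$, forces $\theta$ to be additive, and pins down the $\A$- and $\B$-pieces up to constants summing to zero. The only (cosmetic) difference is that you dispose of all uninformative assignments in one separation step, where the paper peels them off one at a time, and you explicitly flag the product structure of the admissible outcome set as a lemma, which the paper leaves implicit.
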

\begin{proof}
Recall that we use the notation $\theta_1(\A) = g_1(\VYA)$ and $\theta_2(\B) = g_2(\VYB)$. Let $\R$ be the Bernoulli Design. Suppose that $\hat{\theta}(\Z) = \hat{g}(\VYZ)$ is unbiased 
for $g(\VYA, \VYB)$. 
We have:
\begin{flalign}
E(\hat{g}(\VYZ)) = g(\VYA, \VYB) \quad &\Leftrightarrow \quad \frac{1}{|\mathcal{Z}|} \sum_{Z \in \mathcal{Z}} \hat{g}(\VYZ) = g(\VYA, \VYB) \notag\\
&\Leftrightarrow \frac{1}{|\mathcal{Z}|} \sum_{Z \in \mathcal{Z} - \{A,B\}} \hat{g}(\VYZ) + \frac{1}{|\mathcal{Z}|}\hat{g}(\VYA) + \frac{1}{|\mathcal{Z}|}\hat{g}(\VYB)\\
&= g(\VYA,\VYB) \notag\\
&\Leftrightarrow g(\VYA,\VYB) - \frac{1}{|\mathcal{Z}|}\hat{g}(\VYA) - \frac{1}{|\mathcal{Z}|}\hat{g}(\VYB)\\
&= \frac{1}{|\mathcal{Z}|} \sum_{Z \in \mathcal{Z} - \{A,B\}} \hat{g}(\VYZ) \label{eq:init} 
\end{flalign}
Now RHS of the equation doesn't depend on $\VYA$ or $\VYB$ so the LHS mustn't either. Moreover, the 
LHS only depends on $\VYA$ and $\VYB$, so in the end, the LHS cannot depend on any potential outcome $\{Y(Z)\}_Z$. That is, we must have:
\begin{equation} \label{eq:cst}
	g(\VYA, \VYB) - \frac{1}{|\mathcal{Z}|}\hat{g}(\VYA) - \frac{1}{|\mathcal{Z}|}\hat{g}(\VYB) = C^*
\end{equation}
and so:
\begin{equation*}
	|\mathcal{Z}| g(\VYA, \VYB) -\hat{g}(\VYA) - \hat{g}(\VYB) = C
\end{equation*}
where $C$ does not depend on any potential outcome. But then this means that
\begin{equation*}
	|\mathcal{Z}| g(\VYA, \VYB) = C +  \hat{g}(\VYB) + \hat{g}(\VYA) 
\end{equation*}
But $\hat{g}(\VYB)$ and $\hat{g}(\VYA)$ are functions of $\VYB$ and $\VYA$ respectively, so 
this means that if the estimand $g$ can be estimated unbiasedly, it has to be of the form:
\begin{flalign*}
	g(\VYA, \VYB) &= C^* +  g_1(\VYA)  + g_2(\VYB) = C^* + \theta^*_1(\A) + \theta^*_2(\B)\\
	&= \theta_1(\A) + \theta_2(\B)
\end{flalign*}
since $C^*$ doesn't depend on any potential outcome (for instance, let $\theta_1(\A) = C^* + g_1(\VYA)$ and
$\theta_2(\B) = g_2(\VYB)$...).
Now consider any estimand of the form $g(\VYA, \VYB) = C^* +  g_1(\VYA)  + g_2(\VYB)$. Plugging
into Equation~\ref{eq:cst}, we have:
\begin{equation*}
	g_1(\VYA) - \hat{g}(\VYA) + g_2(\VYB) - \hat{g}(\VYB) = C
\end{equation*}
and  so:
\begin{equation*}
	g_1(\VYA) -  \frac{1}{|\mathcal{Z}|} \hat{g}(\VYA)  = C -  g_2(\VYB) +   \frac{1}{|\mathcal{Z}|}\hat{g}(\VYB)
\end{equation*}
now we use the same reasoning as above, leading to the fact that:
\begin{equation*}
 \hat{g}(\VYA) = C_1 + |\mathcal{Z}|g_1(\VYA)
\end{equation*}
and 
\begin{equation*}
 \hat{g}(\VYB) = C_2 +  |\mathcal{Z}|g_2(\VYB)
\end{equation*}
Return now to Equation~\ref{eq:init}, focusing on the RHS and let $\Z_0 \not \in \{\A,\B\}$. A similar argument
as the one used above gives:
\begin{flalign*}
	\frac{1}{|\mathcal{Z}|} \sum_{\Z \in \mathcal{Z} - \{\A,\B\}} \hat{g}(\VYZ)  = C^*\Leftrightarrow 
	\frac{1}{|\mathcal{Z}|} \sum_{\Z \in \mathcal{Z} - \{\A,\B,\Z_0\}} \hat{g}(\VYZ) = C^* - \frac{1}{|\mathcal{Z}|} \hat{g}(\VYZo)
\end{flalign*}
using the same reasoning once again, we conclude that:
\begin{equation*}
	\hat{g}(\VYZo) = C
\end{equation*}
where C does not depend on any potential outcome. And so putting it all together, we have:
\begin{equation*}
	\hat{g}(\VYZ) = C(Z) + \frac{\iv(Z = A)}{1 / |\mathcal{Z}|}g_1(\VYA) + \frac{\iv(Z = B)}{1 / |\mathcal{Z}|}g_1(\VYB)
\end{equation*}

\end{proof}

\begin{theorem}
	Consider any estimand $\theta(\A,\B) = g(\VYA, \VYB)$, and suppose that $g$ is onto the interval [0, M]. 
	That is,
	\begin{equation*}
	\forall x \in [0,M], \, \exists \,\,\, \VYA, \VYB\, \quad s.t\quad g(\VYA, \VYB) = x
	\end{equation*}
	Then for all sample size $N$ and estimator $\hat{\theta}(\Z)$ there exist potential 
	outcomes satisfying Assumption~1 such that:
	\begin{equation*}
		MSE(\hat{\theta}, \theta) \geq \frac{M^2}{8}
	\end{equation*}
	where the MSE is taken under \CRD or \BD.
\end{theorem}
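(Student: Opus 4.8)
The plan is to use an adversarial, two-point construction: I exhibit two configurations of potential outcomes that any estimator is forced to confuse, and then show that no single output can be close to both corresponding values of the estimand. The crucial structural fact, inherited from the discussion following Theorem~\ref{th:bias}, is that an estimator $\hat{\theta}(\Z)$ is a function only of the realized assignment $\Z$ and the observed outcome vector $\VYZ$, whereas the estimand $\theta(\A,\B) = g(\VYA, \VYB)$ depends only on $\VYA$ and $\VYB$. Under the \CRD these two assignments are never observed, and under the \BD they are observed only with the vanishing probability $2^{-N}$ each; this is exactly what lets an adversary change the estimand while leaving the observable data essentially untouched.

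Concretely, I would fix an arbitrary estimator $\hat{\theta}$ and build two ``worlds''. In both worlds I set the potential outcomes on every assignment $\Z \notin \{\A, \B\}$ to a common admissible value (for instance $Y_i(\Z) = M/2$), so that the value the estimator returns upon observing any such $\Z$ --- call it $w(\Z)$ --- is identical across the two worlds. Using that $g$ is onto $[0,M]$, I then choose $(\VYA, \VYB)$ differently in the two worlds so that $\theta$ equals $\theta_1$ in world~1 and $\theta_2$ in world~2 with $\theta_2 - \theta_1 = M$; the technical details in the appendix handle the open-versus-closed endpoints of $[0,M]$, so in the worst case one takes $\theta_2 - \theta_1$ arbitrarily close to $M$. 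Both configurations satisfy Assumption~\ref{bounded-outcomes} by construction.

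The core estimate is the elementary pointwise inequality $(w - \theta_1)^2 + (w - \theta_2)^2 \geq \tfrac{1}{2}(\theta_2 - \theta_1)^2 = M^2/2$, valid for every real $w$. Summing the two mean squared errors and discarding the nonnegative contributions of the assignments $\A$ and $\B$ gives
\[
MSE_1 + MSE_2 \;\geq\; \Big(\sum_{\Z \notin \{\A,\B\}} \sprob_\R(\Z)\Big)\frac{M^2}{2}.
\]
For the \CRD the bracketed total probability is exactly $1$, so $\max(MSE_1, MSE_2) \geq M^2/4$. For the \BD it equals $1 - 2^{1-N}$, which is at least $1/2$ for $N \geq 2$, so $\max(MSE_1, MSE_2) \geq M^2/8$. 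Selecting whichever world attains the larger MSE then produces potential outcomes with $MSE(\hat{\theta}, \theta) \geq M^2/8$, as claimed.

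The main obstacle is the \BD, where $\A$ and $\B$ carry positive probability and the two worlds are therefore not perfectly indistinguishable; this is precisely the source of the constant $M^2/8$, as opposed to the sharper $M^2/4$ available under the \CRD, the loss coming from the factor $1 - 2^{1-N}$, which is worst --- and equal to $1/2$ --- at $N = 2$. The only other point requiring care is the endpoint technicality in ``$g$ onto $[0,M]$'', ensuring that the target values $\theta_1, \theta_2$ are attainable by outcomes strictly inside $(0, M)$; this is exactly what the appendix footnote addresses.
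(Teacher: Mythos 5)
Your proof is correct and rests on the same key construction as the paper's: fixing the potential outcomes to a common constant on every assignment outside $\{\A,\B\}$ so the estimator's output is pinned down, then using surjectivity of $g$ to place the estimand at distance of order $M$ from that output, with the factor $\tfrac{|\calZ|-2}{|\calZ|}\geq\tfrac12$ accounting for the \BD. The only cosmetic difference is that you run a two-world averaging argument via $(w-\theta_1)^2+(w-\theta_2)^2\geq\tfrac12(\theta_2-\theta_1)^2$, whereas the paper picks a single world by sending $g$ to whichever of $0$ or $M$ is at distance at least $M/2$ from the forced output; both yield the identical bound.
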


\begin{proof}
We write down the MSE:
\begin{flalign*}
	MSE(\hat{g}(\VYZ), g(\VYA,\VYB)) &= E\bigg[(\hat{g}(\VYZ) - g(\VYA,\VYB))^2\bigg]
\end{flalign*}
Consider potential outcomes $\{\VYZ\}_Z$ such that $\VYZ = \Y = cst$ for all $\Z\not\in \{\A,\B\}$, and let 
$C = \hat{g}(\Y)$. We now need to consider separately the case of Bernoulli Design (BD) 
and Completely Randomized Design (CRD).  We
start with the CRD. Let $\mathcal{Z}_{N_A}$ be the set of all assignments with $N_A$ units assigned to treatment $A$. We have:
\begin{flalign*}
MSE(\hat{g}(\VYZ), g(\VYA,\VYB)) &= E\bigg[(\hat{g}(\VYZ) - g(\VYA,\VYB))^2\bigg]\\
&= \frac{1}{|\mathcal{Z}_{N_A}|} \sum_{Z\in \mathcal{Z}_{N_A}} (\hat{g}(\VYZ) - g(\VYA,\VYB))^2\\
&= \frac{1}{|\mathcal{Z}_{N_A}|} \sum_{Z\in \mathcal{Z}_{N_A}} (C - g(\VYA,\VYB))^2 \\
&= (C - g(\VYA,\VYB))^2
\end{flalign*}
Since g is surjective, chose $\VYA$ and $\VYB$ such that:
\begin{equation*}
	g(\VYA, \VYB) = \begin{cases}
		0 &\mbox{if} \quad |C - M| < M / 2 \\
		M &\mbox{otherwise}
	\end{cases}
\end{equation*}
and so:
\begin{equation*}
MSE(\hat{g}(\VYZ), g(\VYA, \VYB)) = (C - g(\VYA, \VYB))^2 > \frac{M^2}{4} > \frac{M^2}{8}
\end{equation*}
\end{proof}
now let's turn to the BD case. The slight difference is that the assignments $\Z=\A$ and $\Z=\B$ may occur. We
thus have:
\begin{flalign*}
MSE(\hat{g}(\VYZ), g(\VYA,\VYB)) &= E\bigg[(\hat{g}(\VYZ) - g(\VYA,\VYB))^2\bigg] \\
&= \frac{1}{|\mathcal{Z}|} \sum_{\Z\in \mathcal{Z}} (\hat{g}(\VYZ) - g(\VYA,\VYB))^2\\
&= \frac{1}{|\mathcal{Z}|} \sum_{\Z\in \mathcal{Z} - \{\A,\B\}} (\hat{g}(\VYZ) - g(\VYA,\VYB))^2 \\
&+  \frac{1}{|\mathcal{Z}|} (\hat{g}(\VYA) - g(\VYA,\VYB))^2 \\
&+ \frac{1}{|\mathcal{Z}|} (\hat{g}(\VYB) - g(\VYA,\VYB))^2\\
&= \frac{1}{|\mathcal{Z}|} \sum_{Z\in \mathcal{Z} - \{\A,\B\}} (C - g(\VYA,\VYB))^2 \\ 
&+  \frac{1}{|\mathcal{Z}|} (\hat{g}(\VYA) - g(\VYA,\VYB))^2  \\
&+ \frac{1}{|\mathcal{Z}|} (\hat{g}(\VYB) - g(\VYA,\VYB))^2\\
&\geq \frac{|\mathcal{Z}| - 2}{|\mathcal{Z}|} (C - g(\VYA,\VYB))^2  \\
&\geq \frac{|\mathcal{Z}| - 2}{|\mathcal{Z}|} \frac{M^2}{4}\\
&\geq \frac{M^2}{8}
\end{flalign*}
which concludes the proof.

\begin{remark}
The surjective assumption is not crucial to the spirit of the result, but it does make it easier to state the 
theorem, and highlight the magnitude of how much things can go wrong. Note that the popular average total
effect estimand satisfies the conditions of the theorem.
\end{remark}

%%%%%%%

\begin{proposition}
\begin{equation*}
	\svar[\hat{\theta}] = \svar[\hat{\theta}_{\textbf{A}}] + \svar[\hat{\theta}_{\textbf{B}}] - 2 \scov[\hat{\theta}_{\textbf{A}}, \hat{\theta}_{\textbf{B}}]
\end{equation*}
where:
\begin{flalign*}
	\svar[\hat{\theta}_{\textbf{A}}] &= \frac{1}{N^2}\bigg[ \sum_i (2^{|\Ni^{(k)}_i|}-1) Y_i(\A)^2 \\
	&+ \sum_i\sum_{j\neq i}  (2^{|\Ni_i^{(k)}(\A) \cap \Ni_j^{(k)}(\A)|}-1) Y_i(\A) Y_j(\A)\bigg]
\end{flalign*}
and similarly for $\svar[\hat{\theta}_{\textbf{B}}]$, and:
\begin{flalign*}
	\scov[\hat{\theta}_{\textbf{A}}, \hat{\theta}_{\textbf{B}}] &= -\frac{1}{N^2} \bigg[ \sum_i Y_i(\A) Y_i(\B) \\
	&+ \sum_i\sum_{j\neq i} Y_i(\A) Y_j(\B) \iv(|\Ni^{(k)}_i\cap\Ni^{(k)}_j|>0)\bigg]
\end{flalign*}
\end{proposition}

\begin{proof}
Throughout, we will let $I_i(\A) = \iv(\Z \in \calZ_i^{(k)}(\A))$ and $P_i(\A) = P(I_i(\A) = 1)$.
We begin by looking at $\svar(\hat{\theta}_A)$. We have:
\begin{equation*}
	\svar(\hat{\theta}_A) = \frac{1}{N^2}\bigg[\sum_i \bigg(\frac{Y_i(\A)}{P_i(\A)}\bigg)^2 \svar(I_i(\A)) 
	+ \sum_i\sum_{j\neq i} \frac{Y_i(\A)Y_j(\A)}{P_i(\A)P_j(\A)} \scov(I_i(\A), I_j(\A))\bigg]
\end{equation*}
But we also have:
\begin{equation*}
	\svar(I_i(\A)) = E(I_i(\A)^2) - E(I_i(\A))^2 = P_i(\A) - P_i(\A)^2 = P_i(\A)(1-P_i(\A))
\end{equation*}
and:
\begin{equation*}
	\scov(I_i(\A), I_j(\A)) = E(I_i(\A) I_j(\A)) - E(I_i(\A)) E(I_j(\A)) = P_{ij}(\A,\A) - P_i(\A)P_j(\A)
\end{equation*}
where $P_{ij}(\A, \A) = P\bigg(I_i(\A) = 1, I_j(\A)=1\bigg)$. And so finally:
\begin{flalign*}
	\svar(\hat{\theta}_A) &= \frac{1}{N^2}\bigg[\sum_i \bigg(\frac{Y_i(\A)}{P_i(\A)}\bigg)^2 P_i(\A)(1-P_i(\A))
	+ \sum_i\sum_{j\neq i} \frac{Y_i(\A)Y_j(\A)}{P_i(\A)P_j(\A)}(P_{ij}(\A,\A) - P_i(\A)P_j(\A))\bigg] \\
	&= \frac{1}{N^2}\bigg[\sum_i (\frac{1}{P_i(\A)} - 1)Y_i(\A)^2 
	+ \sum_i\sum_{j\neq i} (\frac{P_{ij}(\A,\A)}{P_i(\A)P_j(\A)} - 1)Y_i(\A)Y_j(\A)\bigg] 
\end{flalign*}
But under k-local exposure and Bernoulli Design, we have $P_i(\A) = \bigg(\frac{1}{2}\bigg)^{|\Ni_i^{(k)}|}$ and
\begin{flalign*}
	P(I_i(\A), I_j(\A)) &= P(I_j(\A) | I_i(\A)) P_i(\A) \\
	&= \bigg( \frac{1}{2} \bigg)^{|\Ni_j^{(k)}| - |\Ni^{(k)}_i\cap \Ni^{(k)}_j|} \bigg(\frac{1}{2}\bigg)^{|\Ni^{(k)}_i|}
\end{flalign*}
and so, plugging into the equation for $\svar(\hat{\theta}_A)$, we have.
\begin{equation*}
	\svar(\hat{\theta}_A) = \frac{1}{N^2} \bigg[ \sum_i \bigg(2^{|\Ni^{(k)}_i} - 1\bigg)Y_i(\A)^2 +
	\sum_i \sum_{j\neq i} \bigg(2^{|\Ni_i^{(k)|} \cap \Ni_j^{(k)}|} - 1\bigg) Y_i(\A) Y_j(\A) \bigg]
\end{equation*}
We now turn to the covariance term $\scov(\hat{\theta}_A, \hat{\theta}_B)$. We have:
\begin{equation*}
	\scov(\hat{\theta}_A, \hat{\theta}_B) = \frac{1}{N^2}\bigg[\sum_i \frac{Y_i(\A)Y_i(\B)}{P_i(\A)P_i(\B)} \scov(I_i(\A), I_i(\B)) + \sum_i\sum_{j\neq i} \frac{Y_i(\A)Y_j(\B)}{P_i(\A)P_j(\B)} \scov(I_i(\A), I_j(\B))  \bigg]
\end{equation*}
but we have:
\begin{flalign*}
\scov(I_i(\A), I_i(\B)) &= E(I_i(\A) I_i(\B)) - E(I_i(\A)) E(I_i(\B)) \\
&= P\bigg(I_i(\A) = 1, I_i(\B)=1\bigg) - P_i(\A) P_i(\B) \\
&= -P_i(\A) P_i(\B)
\end{flalign*}
and
\begin{flalign*}
\scov(I_i(\A), I_j(\B)) &= E(I_i(\A) I_j(\B)) - E(I_i(\A)) E(I_j(\B))\\
&= P(I_j(\B) = 1 | I_i(\A) = 1) P_i(\A) - P_i(\A)P_j(\B) \\
&= \iv(|\Ni^{(k)}_i \cap \Ni^{(k)}_j| = 0) P_j(\B) P_i(\A) - P_i(\A)P_j(\B) \\
\end{flalign*}
and so putting it all together, we have:
\begin{equation*}
	\scov(\hat{\theta}_A, \hat{\theta}_B) = -\frac{1}{N^2}\bigg[\sum_i Y_i(\A)Y_i(\B) + \sum_i\sum_{j\neq i} \bigg(1-\iv(|\Ni^{(k)}_i \cap \Ni^{(k)}_j| = 0) \bigg)Y_i(\A)Y_j(\B)  \bigg]
\end{equation*}
which completes the proof, since $1-\iv(|\Ni^{(k)}_i \cap \Ni^{(k)}_j| = 0) = \iv(|\Ni^{(k)}_i \cap \Ni^{(k)}_j| > 0) $.

\end{proof}

\begin{proposition}
	Denote by $\Z^{(i)}$ the assignment such that $Z_i=A$ and $Z_j = B$ for all $j\neq i$. Under 
	arbitrary interference, the only unbiased estimators of $\theta =  \frac{1}{N} \sum_i Y_i(Z_i=A, \Z_{-i}=\B)$ under the \BD are of the 
	form: 
%	Then under
%	the \BD, the only estimators of the average primary causal effect are of the form:
	%
	\begin{equation}\label{eq:unbiased-primary}
	\hat{\theta}(\Z) = C(\Z) + 2^N \sum_i \iv(\Z = \Z^{(i)}) Y_i(\Z^{(i)})
	\end{equation}
	where $\sum_\Z C(\Z) = 0$.
\end{proposition}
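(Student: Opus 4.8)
The plan is to follow the strategy used for the additive-estimand proposition (Proposition~\ref{prop:bias-extended}): under arbitrary interference the entries of $\{Y_i(\Z)\}_{i,\Z}$ are algebraically unconstrained free parameters, and an estimator evaluated at a realized assignment $\Z$ can only be a function $\hat{g}(\VYZ)$ of the observed outcome vector at that assignment. Writing the Bernoulli-Design unbiasedness condition as the averaging identity
\begin{equation*}
\frac{1}{2^N}\sum_{\Z\in\calZ}\hat{g}(\VYZ)=\theta=\frac{1}{N}\sum_i Y_i(\Z^{(i)}),
\end{equation*}
I would note that the estimand involves the potential outcomes only at the $N$ assignments $\Z^{(1)},\dots,\Z^{(N)}$, which are pairwise distinct (each $\Z^{(i)}$ is the unique vector with $A$ in coordinate $i$ and $B$ elsewhere) and all receive positive probability under the \BD.

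First I would split the left-hand sum into its informative part (over $\Z\in\{\Z^{(1)},\dots,\Z^{(N)}\}$) and its uninformative part, then rearrange so that
\begin{equation*}
\frac{1}{2^N}\sum_{i}\Big[\hat{g}(\Y(\Z^{(i)}))-\tfrac{2^N}{N}Y_i(\Z^{(i)})\Big]=-\frac{1}{2^N}\sum_{\Z\notin\{\Z^{(i)}\}}\hat{g}(\VYZ).
\end{equation*}
The right-hand side is a function only of the outcome vectors at the uninformative assignments, while the left-hand side is a function only of the outcome vectors at the informative assignments; since these are disjoint blocks of free parameters, both sides must equal a common constant $C^*$ depending on no potential outcome. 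From the uninformative equation, perturbing one block $\VYZ$ at a time (the blocks being independent across distinct $\Z$) forces each $\hat{g}(\VYZ)$ with $\Z\notin\{\Z^{(i)}\}$ to be a constant, which I relabel $C(\Z)$.

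For the informative part I would invoke the distinctness of the $\Z^{(i)}$ once more: since $\Y(\Z^{(1)}),\dots,\Y(\Z^{(N)})$ are independent blocks, the bracketed sum equals the constant $C^*$, and perturbing one block $\Y(\Z^{(i)})$ at a time forces each bracketed term to be individually constant, i.e.\ $\hat{g}(\Y(\Z^{(i)}))-\tfrac{2^N}{N}Y_i(\Z^{(i)})=c_i$. This simultaneously shows that $\hat{g}$ depends on $\Y(\Z^{(i)})$ only through the single coordinate $Y_i(\Z^{(i)})$ and that it is affine in it with slope $\tfrac{2^N}{N}$. Folding the constants $c_i$ together with the uninformative $C(\Z)$ into a single function $C(\Z)$ yields the claimed form, and summing the two constant relations gives $\sum_\Z C(\Z)=2^N C^*-2^N C^*=0$.

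The main obstacle is the two-level independence bookkeeping. Unlike Proposition~\ref{prop:bias-extended}, where only the two assignments $\A$ and $\B$ were informative, here there are $N$ distinct informative assignments, and at each one only a single coordinate $Y_i(\Z^{(i)})$ of the full vector $\Y(\Z^{(i)})$ enters the estimand. The care is in arguing twice: first across distinct assignments to isolate each $\Z^{(i)}$-term, and then within each informative assignment to rule out dependence on the irrelevant coordinates $Y_j(\Z^{(i)})$, $j\neq i$. A minor correction this derivation forces is that the coefficient must be $\tfrac{2^N}{N}$ rather than $2^N$, consistent with the $1/N$ normalization of $\theta$ and with the form stated in the main text.
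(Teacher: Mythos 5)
Your proposal is correct and follows essentially the same route as the paper's own proof: split the Bernoulli-Design unbiasedness identity into the informative assignments $\Z^{(1)},\dots,\Z^{(N)}$ and the rest, use the fact that the outcome vectors $\VYZ$ at distinct assignments are independent free parameters to force each side to a constant, and then peel off one assignment at a time to pin down $\hat{g}$ on each block. Your explicit verification that $\sum_{\Z}C(\Z)=0$ and your observation that the coefficient must be $2^N/N$ (matching the main-text statement and the paper's own derivation of $|\calZ|/N$, rather than the $2^N$ appearing in the appendix restatement) are both accurate refinements of what the paper writes.
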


\begin{proof}
The proof follows the same lines as that of Proposition~1. Let $\hat{\theta}(\Z) = \hat{g}(\VYZ)$ be an 
unbiased estimator of $\theta$ under the $\BD$. Thus we have:
\begin{flalign}
	\frac{1}{|\mathcal{Z}|} \sum_{\Z \in \mathcal{Z}} \hat{g}(\VYZ) 
	&= \frac{1}{N}\sum_i Y_i(Z_i=A, \Z_{-i}=\B) \notag\\ 
	\Rightarrow \quad \frac{1}{|\mathcal{Z}|} \sum_{\Z \in \mathcal{Z} - \cup_i \Z^{(i)}} \hat{g}(\VYZ) 
	&= \frac{1}{N}\sum_i Y_i(Z_i=A, \Z_{-i}=\B) - \frac{1}{|\mathcal{Z}|} \sum_{\cup_i \Z^{(i)}} \hat{g}(\VYZ)\notag\\
\end{flalign}
The LHS does not depend on any $\Y(\Z^{(i)})$ for $i=1, \ldots, N$ so the RHS mustn't either. But the RHS only
depends on $\Y(\Z^{(i)})$ for $i=1, \ldots, N$, so it must be that $RHS$ doesn't depend on any potential
outcome. That is, 
\begin{equation*}
 \frac{1}{N}\sum_i Y_i(Z_i=A, \Z_{-i}=\B) - \frac{1}{|\mathcal{Z}|} \sum_{\cup_i \Z^{(i)}} \hat{g}(\VYZ) = C
\end{equation*}
where $C(\Z)$ does not depend on any potential outcomes. Now consider an index $i_0$. We have, using
the previous equation:
\begin{equation*}
	\frac{1}{N}\sum_{i\neq i_0} Y_i(Z_i=A, \Z_{-i}=\B) - \frac{1}{\mathcal{Z}} \sum_{\cup_i\Z^{(i)} - \Z^{(i_0)}} \hat{g}(\Y(\Z)) = \frac{1}{\mathcal{Z}} \hat{g}(\Y(\Z^{(i_0)})) -  \frac{1}{N} \Y_{i_0}(\Z^{(i_0)})
\end{equation*}
Again, the LHS does not depend on $\Y(\Z^{(i_0)})$ so the RHS shouldn't either. But the RHS depends 
explicitly on $\Y(\Z^{(i_0)})$ , so it must be that the RHS doesn't depend on any potential outcomes. Hence:
\begin{flalign*}
C(\Z^{(i_0)}) + \frac{1}{|\mathcal{Z}|} \hat{g}(\Y(\Z^{(i_0)})) -  \frac{1}{N} \Y_{i_0}(\Z^{(i_0)}) &= C'(\Z^{(i_0)})\\
\Rightarrow \hat{g}(\Y(\Z^{(i_0)}))  = C^*(\Z^{(i_0)}) + \frac{|\mathcal{Z}|}{N} \Y_{i_0}(\Z^{(i_0)})
\end{flalign*}
That is, renaming $C^*(\Z) \rightarrow  C(\Z)$, we have:
\begin{equation*}
	\hat{\theta}(\Z^{(i_0)}) = C(\Z^{(i_0)}) + \frac{|\mathcal{Z}|}{N} Y_{i_0}(\Z^{(i_0)})
\end{equation*}	
Now we could apply the same reasoning to any index $i$, so:
\begin{equation*}
\hat{\theta}(\Z^{(i)}) = C(\Z^{(i)}) + \frac{|\mathcal{Z}|}{N} Y_{i}(\Z^{(i)})
\end{equation*}
Now going back to the first equation of the proof, the RHS does not depend on any potential 
outcomes either. So for any $\Z_0 \in \mathcal{Z} - \cup_i \Z^{(i)}$:
\begin{equation*}
	\frac{1}{|\mathcal{Z}|} \sum_{\Z \in \mathcal{Z} - \cup_i \Z^{(i)} - \Z_0} \hat{g}(\VYZ) 
	 = C  - \frac{1}{|\mathcal{Z}|} \hat{g}(\Y(\Z_0))  \\
\end{equation*}
here again, the $LHS$ does not depend on $\Y(\Z_0)$ so the $RHS$ shouldn't either. But the RHS
depends explicitly only on $\Y(\Z_0)$, and so it must be that it depends on no potential outcomes. Hence:
\begin{equation*}
	\hat{g}(\Z_0) = C(\Z_0)
\end{equation*}
Finally putting it all together we have:
\begin{equation*}
\hat{\theta}(\Z) = C(\Z) + \begin{cases}
  \frac{|\mathcal{Z}|}{N} Y_{i}(\Z) & \mbox{ if } \Z = \Z^{(i)}\\
  0  & \mbox{ otherwise}
\end{cases}
\end{equation*}
this can be rewritten:
\begin{equation*}
\hat{\theta}(\Z) = C(\Z) + \frac{2^N}{N} \sum_{\mathcal{Z}} \iv(\Z = \Z^{(i)}) Y_i(\Z^{(i)})
\end{equation*}

\end{proof}

\subsection{Derivations for Example~3}
%%%%%%%%%%%%%%%%

We now assume that $k=1$, and the networks are $ER(N, p)$. We also assume that there 
exists $K, M > 0$ such that:
\begin{equation*}
	K < Y_i(\Z) < M \quad \forall i = 1 \ldots N
\end{equation*}

Define:
\begin{flalign*}
	h_N(C) &= \frac{2}{N^2}\bigg[\sum_i E_\G\bigg(2^{|\Ni^{(k)}_i|} - 1\bigg)C^2  \\
	&+ \sum_i\sum_{j\neq i}  E_\G\bigg(2^{|\Ni_i^{(k)|} \cap \Ni_j^{(k)}|} - 1\bigg)C^2 \\
	&+ N C^2  \\
	&+ \sum_i\sum_{j\neq i} E_\G\bigg(1-\iv(|\Ni^{(k)}_i \cap \Ni^{(k)}_j| = 0)  \bigg)C^2\bigg]
\end{flalign*}
Since all the quantities are positive, it is easy to verify that:
\begin{equation*}
	h_N(K) \leq E_\G(\svar(\hat{\theta})) \leq h_N(M)
\end{equation*}
Under the assumptions stated for this section, we have:
\begin{equation*}
	E_\G\bigg(2^{|\Ni^{(k)}_i|} - 1\bigg) = 2E_\G(2^{d_i}) - 1 \\
\end{equation*}
where $d_i$ is the degree of node $i$. Under the $ER(N,p)$ model, we have $d_i\sim Bin(N-1,p)$. It is then
easy to verify (via the moment generating function) that:
\begin{equation*}
	E_\G(2^{d_i}) = (2p + (1-p))^{N-1} = (1+p)^{(N-1)}
\end{equation*}
and so:
\begin{equation*}
	E_G\bigg(2^{|\Ni^{(k)}_i} - 1\bigg) = 2(1+p)^{(N-1)} - 1
\end{equation*}
Similarly, we have:
\begin{equation*}
	|\Ni^{(k)}_i \cap \Ni^{(k)}_j| = 2I_{ij} + C_{ij}
\end{equation*}	
where $I_{ij} \sim Bern(p)$, $C_{ij} \sim Bin(N-2, p^2)$, and $I_{ij} \perp C_{ij}$. So we have:
\begin{flalign*}
	E_\G\bigg(2^{|\Ni^{(k)}_i \cap \Ni^{(k)}_j|}\bigg) &= E(2^{2I_{ij}})E(2^{C_{ij}})\\
	&= (3p+1) (2p^2 + (1-p^2))^{N-2} \\
	&= (3p+1)(p^2 + 1)^{N-2}
\end{flalign*}
and:
\begin{equation*}
	E_\G\bigg(1-\iv(|\Ni_i \cap \Ni_j| = 0)\bigg) = 1 - P_\G(|\Ni_i \cap \Ni_j| = 0) = 1 - (1-p)(1-p^2)^{N-2}
\end{equation*}
and so
\begin{flalign*}
	h_N(C) &= 2 \bigg[  \frac{2(1+p)^{N-1} - 1}{N} C^2 \\
	&+  \bigg( (3p+1)(p^2 + 1)^{N-2} -1\bigg) C^2 \\
	&+ \frac{C^2}{N} \\
	&+  \bigg(1 - (1-p)(1-p^2)^{N-2} \bigg)C^2\bigg]
\end{flalign*}
In fact, we will augment this notation a bit, and denote it by $h_N(C,p)$, to mark 
the dependence on $C$ and on $p$. Note that we can verify that $h_N(C,p)$ is 
monotone non-decreasing in $p$.
\paragraph{Sparse graphs: $p< \frac{1}{N}$:}
Let $p_N = \frac{1}{N}$, we have:
\begin{flalign*}
	2(1+p_N)^{N-1} - 1 &= 2e^{(N-1)log(1+\frac{1}{N})} - 1 \\
	&\underset{N\rightarrow \infty}{=} 2e^{(N-1) / N} - 1\\
	&\underset{N\rightarrow \infty}{=}2e\times e^{(N-1) / N - 1} - 1 \\
	&\underset{N\rightarrow \infty}{=} 2e( 1 + \frac{(N-1)}{N} - 1) - 1\\
	&= O(1)
\end{flalign*}
and so:
\begin{equation*}
\frac{2(1+p_N)^{N-1} - 1 }{N} = O\bigg(\frac{1}{N}\bigg)
\end{equation*}
Similarly, we have:
\begin{flalign*}
	(3p_N + 1)(p_N^2 + 1)^{N-2} - 1&= (\frac{3}{N} + 1)e^{(N-2)log(1+\frac{1}{N^2})} =1\\
	&\underset{N\rightarrow \infty}{=} (\frac{3}{N} + 1)e^{\frac{(N-2)}{N^2}}-1 \\
	&\underset{N\rightarrow \infty}{=} (\frac{3}{N} + 1)(1 + O\bigg(\frac{N-2}{N^2}\bigg)) -1\\
	&= O\bigg(\frac{1}{N}\bigg) 
\end{flalign*}
and finally, we have:
\begin{flalign*}
	1 - (1-p_N)(1-p_N^2)^{N-2}  &= 1 - (1-p_N)e^{(N-2)log(1-p_N^2)}\\
	&\underset{N\rightarrow\infty}{=} 1 - (1-\frac{1}{N})e^{-\frac{N-2}{N^2}}\\
	&\underset{N\rightarrow\infty}{=}  1 - (1-\frac{1}{N})(1 - \frac{N-2}{N^2}) \\
	&\underset{N\rightarrow\infty}{=}  = O\bigg(\frac{1}{N}\bigg)
\end{flalign*}
And so putting it all together, we have $h_N(M,p_N) = O(\frac{1}{N})$ and so finally:
\begin{equation*}
	E_\G(\svar(\hat{\theta})) \leq h_N(M,p) \leq h_N(M, p_N) = O\bigg(\frac{1}{N}\bigg)
\end{equation*}

\paragraph{Dense graphs: $p> \frac{1}{\sqrt{N}}$:}
Let $p_N = \frac{1}{\sqrt{N}}$. We have:
\begin{flalign*}
	\frac{2(1+p_N)^{N-1} - 1}{N} &= \frac{2e^{(N-1) log(1 + \frac{1}{\sqrt{N}})}}{N}\\
	&\underset{N\rightarrow\infty}{=} \frac{2e^{\frac{(N-1)}{\sqrt{N}}}}{N}
\end{flalign*}
and so:
\begin{equation*}
	E_\G(\svar(\hat{\theta}))  \geq h_N(K, p) \geq h_N(K,p_N) \geq  \frac{4e^{\frac{(N-1)}{\sqrt{N}}}}{N} K^2
\end{equation*}

\subsection{Derivations for Section~4}
%%%%%%%%%%%%%%%%%%

\paragraph{$E_i$ , $S_i(\Z)$  and $F_i$ under BD:}

We have seen that under no-interference, $G_i = \{i\}$ so the effective treatments are $\{\Z_{\{G_i\}}\}_{\Z\in\calZ} = \{A,B\}$
and so $E_i = 2$. Under arbitrary interference, $G_i = \{1, \ldots, N\}$, so the effective treatments are  
$\{\Z_{\{G_i\}}\}_{\Z\in\calZ} = \{\Z\}_{\Z\in\calZ}$ and so $E_i = 2^N$. Finally, under 1-local interference, $G_i = \Ni_i$, so
$\{\Z_{\{G_i\}}\}_{\Z\in\calZ} = \{\Z_{\Ni_i}\}_{\Z\in\calZ}$, and thus $E_i = 2^{|\Ni_i|}$ under BD.\\

\noindent Recall that $S_i(\Z) = |\mathcal{Z}_i(\Z)|$, where: 
\begin{equation*}
\mathcal{Z}_i(\Z) = \{ \Z' : \Z'_{G_i} = \Z_{G_i} \quad \mbox{and} \quad \mathbb{P}(\Z') > 0\}
\end{equation*}
Under the Bernoulli Design and 1-local interference $G_i = \Ni_i$. So if we denote by 
$\Z_{-\Ni_i}$ the sub-vector of $\Z$ excluding the neighborhood of $i$, $\Z_{-\Ni_i}$ is
of dimension $N-\Ni_i$ and 
\begin{equation*}
	|\mathcal{Z}_i(\Z)| = |\{\Z_{-\Ni_i}\}_\Z| = 2^{N-\Ni_i} = \frac{2^N}{E_i}
\end{equation*}
it immediately follows that $F_i = \frac{1}{E_i}$.

\paragraph{$\sexpg(E_i)$ and $\sexpg(S_i(\Z))$ under BD:}

We have:

\begin{flalign*}
	\sexpg(S_i(\Z)) &= E_\G(2^{N - |\Ni_i|}) \\
	&= 2^{N-1} E_\G\bigg( (1/2)^{d_i}\bigg) \\
	&= 2^{N-1} (p/2 + (1-p))^{N-1}\\
	&= 2^{N-1} (1 - \frac{p}{2})^{N-1}
\end{flalign*}
and so for $p = 1/N$ we have:
\begin{flalign*}
	E_\G(|\calZ_i(Y_i(\A))|) &= 2^{N-1} e^{(N-1)log(1-\frac{1}{2N})}\\
	&\underset{N\rightarrow \infty}{=} 2^{N-1}e^{-\frac{N-1}{2N}}\\
	&\underset{N\rightarrow \infty}{=} \frac{2^{N-1}}{e^{1/2}}
\end{flalign*}
and so the fraction of all assignments that correspond to effective
treatment of $i$ is asymptotically:
\begin{equation*}
	\frac{E_\G(|\calZ_i(Y_i(\A))|)}{|\mathcal{Z}|} = \frac{1}{2e^{1/2}}
\end{equation*}
in the dense case however, we have:
\begin{flalign*}
	E_\G(|\calZ_i(Y_i(\A))|) &= 2^{N-1} e^{(N-1)log(1-\frac{1}{2\sqrt{N}})}\\
	&\underset{N\rightarrow \infty}{=} 2^{N-1}e^{-\frac{N-1}{2\sqrt{N}}}\\
	&\underset{N\rightarrow \infty}{=} \frac{2^{N-1}}{e^{\sqrt{N}/2}}
\end{flalign*}
and so asymptotically:
\begin{equation*}
	\frac{E_\G(|\calZ_i(Y_i(\A))|)}{|\mathcal{Z}|} = \frac{1}{2e^{\sqrt{N}/2}} \rightarrow 0
\end{equation*}

Similarly, we have:
\begin{flalign*}
\sexpg(E_i) &= \sexpg(2^{|\Ni_i|})\\
&= 2\sexpg(2^{d_i}) \\
&= 2 (1 + p)^{N-1}
\end{flalign*}
and so for $p=1/N$, we have:
\begin{flalign*}
\sexpg(E_i) &= 2e^{(N-1) log(1+1/N)} \\
&\rightarrow 2e^{(N-1) / N} \\
&\rightarrow 2e
\end{flalign*}
while for $p=1 / \sqrt{N}$ we have:
\begin{flalign*}
\sexpg(E_i) &=  2e^{(N-1) log(1+1/\sqrt{N})}\\
&\rightarrow 2e^{(N-1) / \sqrt{N}}\\
&\rightarrow \infty
\end{flalign*}

\end{document}